\newcommand{\be}{\begin{equation}}
\newcommand{\en}{\end{equation}}
\newcommand{\bea}{\begin{eqnarray}}
\newcommand{\ena}{\end{eqnarray}}
\newcommand{\beano}{\begin{eqnarray*}}
\newcommand{\enano}{\end{eqnarray*}}
\newcommand{\bee}{\begin{enumerate}}
\newcommand{\ene}{\end{enumerate}}
\newcommand{\Hil}{{\cal H}}
\newcommand{\F}{{\cal F}}
\newcommand{\Lc}{{\cal L}}
\newcommand{\Sc}{{\cal S}}
\newcommand{\D}{{\cal D}}
\newcommand{\K}{{\cal K}}
\newcommand{\M}{{\cal M}}
\newcommand{\ltwo}{{\Lc^2(\mathbb{R})}}
\newcommand{\scr}{{\Sc(\mathbb{R})}}
\newcommand{\1}{1 \!\! 1}
\newtheorem{thm}{Theorem}
\newtheorem{cor}[thm]{Corollary}
\newtheorem{lemma}[thm]{Lemma}
\newtheorem{prop}[thm]{Proposition}
\newtheorem{rem}[thm]{Remark}
\newtheorem{defn}[thm]{Definition}
\newtheorem{example}[thm]{Example}
\newenvironment{proof}{\noindent {\bf Proof:}}{\hfill$\Box$}
\begin{document}
\thispagestyle{empty}
\vspace*{1cm}

\begin{center}
{\Large \bf Unbounded Hamiltonians generated by Parseval frames} 

\vspace{4mm}

{\large F. Bagarello${}^{1,2}$, S. Ku\.{z}el${}^3$}\\
\vspace*{1cm}

\normalsize
${}^1$Dipartimento di Ingegneria, Universit\`a di Palermo, I-90128  Palermo, Italy\\

\vspace*{.5cm}
${}^2$ INFN, Sezione di Catania, Italy\\

\vspace*{.5cm}
${}^3$ AGH University, Krak\'ow, Poland.\\

\end{center}

\vspace*{0.5cm}

\begin{abstract}
	 {In \cite{BK} Parseval frames were used to define bounded Hamiltonians, both in finite and in infinite dimesional Hilbert spaces. Here we continue this analysis, with a particular focus on the discrete spectrum of Hamiltonian operators defined as a weighted infinite sum of rank one operators defined by some Parseval frame living in an infinite dimensional Hilbert space. The main difference with \cite{BK} is that, here, the operators we consider are mostly unbounded. This is an useful upgrade with respect to our previous results, since physically meaningful Hamiltonians are indeed often unbounded. However, due to the fact that  frames (in general) are not bases, the definition of an Hamiltonian is not so easy, and part of our results goes in this direction. Also, we discuss the eigenvalues of the Hamiltonians, and we discuss some physical applications of our framework. }
\end{abstract}

\vspace{2cm}


\vfill

\newpage

\section{Introduction}\label{sect1}

In quantum mechanics a central object is the Hamiltonian $H$ 
of the physical system $\Sc$ one is interested in. $H$ is, for closed and conservative systems, the energy of $\Sc$. 
Quite often, the first step to analyze $\Sc$ consists in finding the eigenvectors of $H$, 
$e_j$: $H\,e_j=E_je_j$, $j=1,2,3,\ldots$. If $H$ is self-adjoint, then each $E_j\in\mathbb{R}$ and 
eigenvectors related to different eigenvalues are orthogonal: $\langle e_j,e_k\rangle=0$, if $E_j\neq E_k$. 
Most of the times $\F_e=\{e_j\}$ is an orthonormal basis (ONB) for $\Hil$, the Hilbert space where $\Sc$ is defined, and the Hamiltonian is written 
\be
H=\sum_jE_j\langle e_j,\cdot\rangle\,e_j=\sum_jE_jP_j,
\label{11}\en
where we have introduced $P_j$ acting on a generic vector $f$ of $\Hil$ as follows: $P_jf=\langle e_j,f\rangle\,e_j$. Since $P_j=P_j^*=P_j^2$, $H$ can be seen as a weighted
sum of orthogonal projectors, with weights given by its eigenvalues $E_j$. If is clear that the domain of $H$, $\mathcal{D}(H)$, is not necessarily all of $\Hil$. 
In fact we have $\mathcal{D}(H)=\{f\in\Hil:\,\, \sum_j{E_j^2}|\langle e_j,f\rangle|^2<\infty\}$, which is surely dense in $\Hil$ (it contains the linear span of the $e_j$'s), but not necessarily coincident with $\Hil$. 

In the past 30 years it has become clearer and clearer that the Hamiltonian of a system does not really have to be self-adjoint, \cite{benboet}. Since this seminal paper, an always increasing number of physicists and mathematicians started to consider this possibility, 
where the reality of the eigenvalues of a manifestly non self-adjoint Hamiltonian $H$ is due to some physical symmetry rather than on the mathematical requirement that $H=H^*$.
 However, even if the eigenvalues of $H$ are real, the eigenvectors, $\varphi_j$ are no longer mutually orthogonal, in general. Still, if the set $\F_\varphi=\{\varphi_j\}$ is a basis for $\Hil$, a second (uniquely determined) basis of the Hilbert space also exists, $\F_\psi=\{\psi_j\}$, such that $\langle\varphi_j,\psi_k\rangle=\delta_{j,k}$, 
 and $H^*\psi_k=\overline{E_k}\,\psi_k$. In this case we can write $H$ and $H^*$ as follows:
$$
H=\sum_jE_j\langle \psi_j,\cdot\rangle\,\varphi_j=\sum_jE_jQ_j, \qquad H^*=\sum_j\overline{E_j}\langle \varphi_j,\cdot\rangle\,\psi_j=\sum_j\overline{E_j}\,Q_j^*,
$$
where $Q_jf=\langle\psi_j,f\rangle\,\varphi_j$ is a projection operator, but 
it is not orthogonal if $\varphi_j\neq\psi_j$. Operators of this kind have been studied in the past, mainly from a mathematical point of view. 
We refer to \cite{bit2014} for some results, mainly for the case in which $\F_\varphi$ and $\F_\psi$ are Riesz bases. 
Later, generalizations of this situation have been considered, \cite{bagbell,hinoue,hinouetaka}. In all these extensions, biorthogonality of the sets of vectors used to 
define some specific Hamiltonian was required. In \cite{BK}, this assumption was removed, in our knowledge, for the first time: we used frames, and in particular Parseval frames 
(PF), rather than bases. Hence the existence of a biorhogonal basis is not guaranteed. 
Our interest was mainly mathematical, but was also based on a very simple physical remark:   
in the analysis of a concrete physical situation it may happen that \emph{not all} {vectors of $\Hil$} are {\em relevant} in the analysis of $\Sc$. 
For instance, when the energy of $\Sc$ cannot {\em really} increase too much, or when $\Sc$ is localized in a bounded region, or still when the value of the momentum 
of $\Sc$ cannot be too large. In all these cases, but not only, it is reasonable to consider a {\em physical vector space}, $\Hil_{ph}$, as the subset of the
{\em mathematical} Hilbert space $\Hil$ on which $\Sc$ is originally defined. 
 $\Hil_{ph}$ can be constructed as the projection of $\Hil$, via some suitable orthogonal projector operator 
$P$ i.e.,   $\Hil_{ph}=P\Hil$. The restriction of $H$ defined by \eqref{11} onto $\Hil_{ph}$ gives rise to the new Hamiltonian $ H_{ph}=PHP$ (physical part
of $H$) acting in $\Hil_{ph}$
\begin{equation}\label{K2}
	H_{ph}f=\sum{E}_j \langle \varphi_j, f\rangle\,\varphi_j, 
\end{equation} 
where,  the set of vectors $\F_\varphi=\{\varphi_j=Pe_j\}$  loses the property of being an ONB of 
$\Hil_{ph}$ and, instead, it turns out to be a  Parseval frame of $\Hil_{ph}$.

In this paper we explore further this situation, extending what we have found in \cite{BK} to unbounded operators, which are often more relevant in the analysis of concrete physical systems, \cite{br1,hall}.

The paper is divided in two parts: in Section \ref{sectII} we describe our mathematical results, while Section \ref{sect3} contains some detailed examples. More in details, after some preliminaries in Section \ref{sII.1}, in Section \ref{sII.2} we introduce the Hamiltonian $H_{{\varphi}, \sf E}=\sum_{j\in\mathbb{J}}E_j \langle \varphi_j, \cdot\rangle\,\varphi_j$, for a PF $\{\varphi_j\}$, and we study its domain, and we give conditions for $H_{{\varphi}, \sf E}$ to be, or not, bounded and self-adjoint. Its spectrum is then analyzed 
in Section \ref{sII.4}.

In Section \ref{s3.1} we use a PF first introduced by Casazza and Christensen to define a particular $H_{{\varphi}, \sf E}$, and in particular to study its eigenvalues and eigenvectors. We also show how to introduce two different sets of ladder operators, $a_n$ and $V_n$. In particular, $a_n$ and its adjoint $a_n^*$ obey a truncated version of the canonical commutation relations which was first used, in our knowledge, in \cite{bagchi}, and later extended in \cite{bag2018}. These truncated bosonic operators allow to interpolate between fermions and bosons, going from $n=2$ to $n\rightarrow\infty$. We will also comment that the operators $V_n$ may have an interesting role in signal analysis.

Section \ref{s3.2} contains a first part, where we propose a particular method to construct a PF out of an ONB, and then we apply this method to the Hamiltonian of a single electron in a strong magnetic field. Its eigenstates produce the so-called  Landau levels, which have infinite degeneracy. This Hamiltonian is at the basis of the analysis of the quantum Hall effect. In particular, we will show that our approach could be seen as generating two different lattices in each Landau levels.

\section{Hamiltonians generated by Parseval frames}\label{sectII}

\subsection{Parseval frames.}\label{sII.1}
Here all necessary facts about Parseval frames   
are presented in a form convenient for our exposition. More details can be found in \cite{chri, heil}.

Let $\K$ be a complex infinite-dimensional separable Hilbert space with scalar product $\langle \cdot, \cdot \rangle$ 
linear in the second argument.  Denote by $\mathbb{J}$  a generic countable index set such as
$\mathbb{Z}$, $\mathbb{N}$, $\mathbb{N}\cup\{0\}$, etc. 

A \emph{Parseval  frame}  (PF for short) is a family of vectors $\F_\varphi=\{\varphi_j, \,j\in \mathbb{J}\}$  in $\K$  which satisfies
\begin{equation}\label{uman1}
\sum_{j\in\mathbb{J}}{|\langle \varphi_j, f\rangle|^2}=||f||^2, \quad f\in \K.
\end{equation}
It follows from \eqref{uman1} that $\|\varphi_j\|\leq{1}$ for $j\in\mathbb{J}$.

According to the Naimark dilation theorem, each PF in $\K$ can be extended to an orthonormal basis of a wider subspace $\Hil$. 
\begin{thm}[\cite{hanlar}]\label{Naimark}
Let $\F_\varphi=\{\varphi_j, \,j\in \mathbb{J}\}$ be a PF in a Hilbert space $\K$. Then there exists a complementary Hilbert space $\M$ and a PF
$\F_\psi=\{\psi_j, \,j\in \mathbb{J}\}$ in $\M$ such that
\begin{equation}\label{K3b}
\F_h=\{h_j=\varphi_j\oplus\psi_j, \ \  j\in \mathbb{J}\}
\end{equation} 
is an ONB for $\Hil=\K\oplus\M$. 
\end{thm}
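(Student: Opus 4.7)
The plan is to realize the claimed ONB $\{h_j\}$ as the pullback, through a suitable unitary, of the canonical orthonormal basis of $\ell^2(\mathbb{J})$. First I would introduce the analysis operator $T:\K\to\ell^2(\mathbb{J})$ defined by $Tf=\{\langle\varphi_j,f\rangle\}_{j\in\mathbb{J}}$. The Parseval identity \eqref{uman1} is exactly the statement $\|Tf\|_{\ell^2}^2=\|f\|_\K^2$, so $T$ is an isometry, its range $\mathcal{R}(T)$ is a closed subspace of $\ell^2(\mathbb{J})$, and the synthesis operator $T^*$ satisfies $T^*T=I_\K$ together with $T^*\delta_j=\varphi_j$ for every $j$, where $\{\delta_j\}$ denotes the canonical basis of $\ell^2(\mathbb{J})$. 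The last identity is immediate from the adjoint pairing $\langle T^*\delta_j,f\rangle_\K=\langle\delta_j,Tf\rangle_{\ell^2}=\langle\varphi_j,f\rangle_\K$.

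Next I would take $\M:=\mathcal{R}(T)^\perp\subset\ell^2(\mathbb{J})$ as the complementary Hilbert space and let $P=TT^*$ be the orthogonal projection of $\ell^2(\mathbb{J})$ onto $\mathcal{R}(T)$. Setting $\Hil=\K\oplus\M$, I would define $U:\Hil\to\ell^2(\mathbb{J})$ by $U(f\oplus g)=Tf+g$. Since $T$ is unitary from $\K$ onto $\mathcal{R}(T)$ and $\M\perp\mathcal{R}(T)$ inside $\ell^2(\mathbb{J})$, the map $U$ is unitary; consequently the vectors $h_j:=U^{-1}\delta_j$ form an ONB of $\Hil$ automatically.

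It then remains to identify the two components of $h_j$. Decomposing $\delta_j=P\delta_j+(I-P)\delta_j$, the first summand lies in $\mathcal{R}(T)$ and equals $T(T^*\delta_j)=T\varphi_j$, while the second, which I would call $\psi_j:=(I-P)\delta_j$, lies in $\M$. Therefore $h_j=\varphi_j\oplus\psi_j$, which is precisely the form required by \eqref{K3b}. That $\{\psi_j\}$ is itself a PF in $\M$ is a short verification: for any $g\in\M$ one has $\langle P\delta_j,g\rangle=0$, hence $\langle\psi_j,g\rangle=\langle\delta_j,g\rangle$, and so $\sum_{j\in\mathbb{J}}|\langle\psi_j,g\rangle|^2=\sum_{j\in\mathbb{J}}|\langle\delta_j,g\rangle|^2=\|g\|^2$.

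The main obstacle is not computational but structural: one must spot the right unitary identification so that the canonical basis of $\ell^2(\mathbb{J})$ pulls back to an ONB whose first coordinate is literally $\varphi_j$ rather than some other vector unitarily equivalent to it. Once this identification is made, the rest is bookkeeping driven by the single identity $T^*\delta_j=\varphi_j$.
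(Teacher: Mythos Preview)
Your argument is correct and is precisely the standard Naimark dilation construction: realize the analysis operator $\theta_\varphi$ as an isometry into $\ell^2(\mathbb{J})$, take $\M$ to be the orthogonal complement of its range, and pull back the canonical basis $\{\delta_j\}$ through the natural unitary $U:\K\oplus\M\to\ell^2(\mathbb{J})$. The identifications $T^*\delta_j=\varphi_j$ and $\psi_j=(I-TT^*)\delta_j$ are exactly right, and your verification that $\{\psi_j\}$ is a PF in $\M$ is clean.

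Note, however, that the paper does \emph{not} supply its own proof of this theorem: it is quoted from Han--Larson \cite{hanlar} and used as background. So there is nothing to compare against beyond the cited source, and your construction is essentially the one given there (and reproduced in standard references such as \cite{chri, heil}). In the notation the paper adopts later, your $T$ is the analysis operator $\theta_\varphi$ of \eqref{K31}, your $T^*$ is the synthesis operator $\theta_\varphi^*$ of \eqref{K12b}, and the decomposition $\ell^2(\mathbb{J})=\mathcal{R}(T)\oplus\M$ is exactly \eqref{abba1}; it may be worth aligning your notation with theirs if this is to be inserted into the paper.
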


It is also easy to prove, with a direct computation, that taken an ONB $\{e_n\}$ of $\Hil$, 
and an orthogonal projector $P$, the set $\{Pe_n\}$ is a PF in $\Hil_{ph}=P\Hil$, as already claimed in the Introduction.

The \emph{excess} $e[\F_\varphi]$ of a PF of $\F_\varphi=\{\varphi_j, \,j\in \mathbb{J}\}$  is the greatest integer $n$ such that  $n$ elements  
can be deleted from the frame $\F_\varphi$ and still leave a complete set, or $\infty$ if there is no upper bound to the number of elements that can be removed.
It follows from \cite[Lemma 4.1]{BCCL} and Theorem \ref{Naimark} that  $e[\F_\varphi]$ coincides with $\dim\M$,  where
$\M$ is  the complementary Hilbert space in Theorem \ref{Naimark}. 
 The zero excess means that  $\F_\varphi$ 
is an ONB  of $\K$.  The finite excess $e[\F_\varphi]$ means that the index set $\mathbb{J}$ can be decomposed
 $\mathbb{J}=\mathbb{J}_0\cup\mathbb{J}_1$ in such a way that 
 $\F_\varphi^0=\{\varphi_j, j\in\mathbb{J}_0\}$ is a Riesz basis in $\K$ and $\mathbb{J}_1$ is a finite set \cite{CC98}. 

Each PF  $\F_\varphi$ determines an isometric operator $\theta_\varphi : \K \to \ell_2(\mathbb{J})$:
\begin{equation}\label{K31}
\theta_\varphi{f}=\{\langle \varphi_j, f\rangle\}_{j\in\mathbb{J}}, \qquad f\in\K,
\end{equation}
which is called an \emph{analysis operator} associated with $\F_\varphi$. 

The adjoint operator $\theta_\varphi^* : \ell_2(\mathbb{J}) \to \K$ of  $\theta_\varphi$ is called a \emph{synthesis operator} and it 
acts as follows 
\begin{equation}\label{K12b}
\theta_\varphi^*\{c_j\}=\sum_{j\in\mathbb{J}}c_j\varphi_j, \qquad  \{c_j\}\in\ell_2(\mathbb{J}).
\end{equation}

Let $\theta_\varphi$ and $\theta_\psi$ be analysis operators associated with  PF's  $\F_\varphi$ and  $\F_\psi$  from  Theorem \ref{Naimark}. Then 
\begin{equation}\label{abba1}
\ell_2(\mathbb{J})={\mathcal R}(\theta_\varphi)\oplus{\mathcal R}(\theta_\psi),
\end{equation}
where ${\mathcal R}(\theta_\varphi)$ and ${\mathcal R}(\theta_\psi)$ are 
the image sets of the operators $\theta_\varphi$ and $\theta_\psi$, respectively.

By virtue of  \eqref{K12b} and \eqref{abba1},  for  $\{c_j\}\in\ell_2(\mathbb{J})$, the following relation holds
\begin{equation}\label{K13}
\sum_{j\in\mathbb{J}}c_j\varphi_j=0 \quad \iff \quad  \{c_j\}\in\ker\theta_\varphi^*=\ell_2(\mathbb{J})\ominus{\mathcal R}(\theta_\varphi)={\mathcal R}(\theta_\psi).
\end{equation}

\subsection{Operators $H_{{\varphi}\sf E}^{min}$ and $H_{{\varphi}\sf E}^{max}$.}\label{sII.2}
For a given  PF  $\F_{\varphi}=\{\varphi_j,  j\in\mathbb{J}\}$  and a sequence of real quantities  ${\sf E}=\{E_j,  j\in\mathbb{J}\}$, one can introduce a linear operator
\begin{equation}\label{K5}
H_{{\varphi}, \sf E}=\sum_{j\in\mathbb{J}}E_j \langle \varphi_j, \cdot\rangle\,\varphi_j
\end{equation}
in a Hilbert space $\K$. If $e[\F_{\varphi}]=0$  (i.e., $\F_{\varphi}$ is an ONB of $\K$), the quantities $E_j$ in \eqref{K5} turns out to be eigenvalues of $H_{{\varphi}\sf E}$. 
If $\F_\varphi$ is a PF (but not an ONB), this fact does not hold in general (see Lemma \ref{palermo1} below). 

The operator $H_{{\varphi}\sf E}$ may be unbounded and its domain of definition should be specified.  
There are two natural domains {for $H_{{\varphi}\sf E}$ typically used in the literature:}
\begin{equation}\label{uman6}
\begin{array}{l}
  \mathcal{D}_{min}=\{f\in\K \ : \ \sum_{j\in\mathbb{J}}E_j^2 |\langle \varphi_j,  f\rangle|^2<\infty\}, \vspace{3mm} \\
\mathcal{D}_{max}=\{f\in\K \ : \ \sum_{j\in\mathbb{J}}E_j \langle \varphi_j,  f\rangle\,\varphi_j \quad \mbox{converges unconditionally in} \quad \K\}.
\end{array}
\end{equation}
{It follows from \cite[Theorem 7.2 (b)]{heil}  that}
\begin{equation}\label{uman2}
 \mathcal{D}_{min}\subseteq\mathcal{D}_{max}.
 \end{equation}
We observe that it may easily happen in \eqref{uman2} that  $\mathcal{D}_{min}$ is indeed a proper
subspaces of $\mathcal{D}_{max}$.


\begin{example}
Set $\mathbb{J}=\mathbb{Z}\setminus\{0\}$ and consider
 the quantities $E_n=n^2$, $E_{-n}=0$ $(n\in\mathbb{N})$ and an ONB $\{{e}_n\}_{n\in\mathbb{N}}$ of $\K$. 
  In view of \cite[Example 8.35]{heil}, the set of vectors $\F_\varphi=\{\varphi_j,  \ j\in\mathbb{J}\}$, where
 $$
 \varphi_n=\frac{1}{n}{e}_n,  \qquad   \varphi_{-n}=\sqrt{1-\frac{1}{n^2}}{e}_n 
 $$
 is a PF.  In this case,  the series 
 $$
 \sum_{j\in\mathbb{J}}E_j \langle \varphi_j,  f\rangle\,\varphi_j=\sum_{n\in\mathbb{N}}\langle {e}_n,  f\rangle\, {e}_n
 $$
 converges unconditionally for all $f\in\K$. Hence,  $\mathcal{D}_{max}=\K$. 
On the other hand,
$\mathcal{D}_{min}=\{f\in\K: \sum_{n\in\mathbb{N}}n^2 |\langle {e}_n,  f\rangle|^2<\infty\}$ is a subset of $\K$. 
Therefore, $\mathcal{D}_{min}\subset\mathcal{D}_{max}$.
\end{example}

The specificity of frames gives rise to the following curious  fact. 
 
\begin{prop}\label{prop1}
For arbitrary unbounded set of quantities  ${\sf E}$  there are uncountably many PF's $\F_\varphi$  such that
$\mathcal{D}_{min}$  is trivial, i.e. $\mathcal{D}_{min}=\{0\}$.
\end{prop}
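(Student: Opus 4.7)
The plan is to construct, for any unbounded $\mathsf{E}$, an explicit family of Parseval frames $\F_\varphi$ in $\K$ whose $E_j^2$-weighted Bessel sum diverges on every nonzero vector. The key idea is to place countably many frame vectors along each coordinate direction of a prescribed ONB of $\K$, with norms $c_m$ that decay just slowly enough to be overwhelmed by the weights $E_j^2$. Since $\mathsf{E}$ is unbounded and $\mathbb{J}$ is countable, we can extract distinct indices $j_1,j_2,\ldots\in\mathbb{J}$ with $|E_{j_k}|\geq k$ (inductively, at each step picking an index outside the previously chosen ones with $|E_\cdot|$ at least $k$; such an index must exist, else $\sup_j|E_j|<\infty$). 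Next, partition $\mathbb{N}=\bigsqcup_{n\in\mathbb{N}}B_n$ into infinite sets and enumerate $B_n=\{b_n(1)<b_n(2)<\cdots\}$, noting $b_n(m)\geq m$. Fix weights $c_m=\sqrt{6}/(\pi m)$, so that $\sum_m c_m^2=1$.

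For any ONB $\{e_n\}_{n\in\mathbb{N}}$ of $\K$, define $\F_\varphi=\{\varphi_j\}_{j\in\mathbb{J}}$ by $\varphi_{j_{b_n(m)}}:=c_m e_n$ and $\varphi_j:=0$ for $j\notin\{j_k:k\in\mathbb{N}\}$. Parseval's identity for $\{e_n\}$, combined with $\sum_m c_m^2=1$, immediately gives
\[
\sum_{j\in\mathbb{J}}|\langle\varphi_j,f\rangle|^2=\sum_{n}|\langle e_n,f\rangle|^2\sum_{m}c_m^2=\|f\|^2,
\]
so $\F_\varphi$ is a PF. To check $\mathcal{D}_{min}=\{0\}$, fix $f\neq 0$ and pick $n_0$ with $\alpha:=\langle e_{n_0},f\rangle\neq 0$. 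Then, using $|E_{j_{b_{n_0}(m)}}|\geq b_{n_0}(m)\geq m$,
\[
\sum_{j\in\mathbb{J}}E_j^2|\langle\varphi_j,f\rangle|^2\;\geq\;|\alpha|^2\sum_{m\geq 1}E_{j_{b_{n_0}(m)}}^2\,c_m^2\;\geq\;|\alpha|^2\sum_{m\geq 1}m^2\cdot\frac{6}{\pi^2 m^2}=+\infty.
\]

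To obtain uncountably many distinct such PFs, keep $\{j_k\}$, $\{B_n\}$ and $\{c_m\}$ fixed and vary only the first basis vector $e_1$ over the uncountable unit sphere of $\K$ (completing each $e_1$ to an ONB in any chosen way). Distinct $e_1$'s yield distinct frame vectors $\varphi_{j_{b_1(1)}}=c_1 e_1$ with $c_1>0$, and hence distinct PFs. The main obstacle is calibrating the three ingredients—$|E_{j_k}|\geq k$, $b_n(m)\geq m$, and $c_m\sim 1/m$—so that each ``sectional'' sum $\sum_m E_{j_{b_n(m)}}^2 c_m^2$ diverges simultaneously for every $n$; this is the only place where the unboundedness of $\mathsf{E}$ is actually used.
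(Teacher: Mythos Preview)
Your argument is correct. The inductive extraction of indices $j_k$ with $|E_{j_k}|\geq k$ works exactly as you say (removing finitely many indices cannot destroy unboundedness), the verification of the Parseval identity is immediate from $\sum_m c_m^2=1$, and the divergence estimate using $b_n(m)\geq m$ is clean. Varying $e_1$ over the unit sphere indeed produces uncountably many distinct frames.

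Your route is genuinely different from the paper's. The paper works in the Naimark picture: identifying the dilation space with $\ell^2(\mathbb{J})$, the condition $\mathcal{D}_{min}=\{0\}$ becomes $\mathcal{D}(\mathcal{E})\cap\K=\{0\}$ for the multiplication operator $\mathcal{E}\{c_j\}=\{E_jc_j\}$, and the existence of uncountably many such closed subspaces $\K$ is then read off from the Arlinski\u{\i}--Zagrebnov extension of Schm\"udgen's theorem; the frame is obtained by projecting the canonical basis onto $\K$. Your construction, by contrast, is entirely elementary and explicit: you manufacture the frame directly inside a fixed $\K$ by stacking infinitely many rescaled copies of each basis vector $e_n$, calibrated so that the $E_j^2$-weighted tail diverges along every coordinate direction. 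What the paper's approach buys is a tight fit with the surrounding Naimark-dilation framework (the frames arise as $Pe_j$ with $P$ an orthogonal projection, and none of the frame vectors need vanish); what your approach buys is self-containment, avoiding any appeal to a nontrivial operator-theoretic result. The price you pay is that your frames are rather degenerate---many zero vectors and infinitely many parallel ones---but the proposition imposes no restriction ruling this out.
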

\begin{proof}
Without loss of generality, one can assume that $\mathcal{H}=\ell^2(\mathbb{J})$, where $\mathbb{J}$ is the countable set of indices.
Let  $\mathcal{E}$ be an operator of multiplication by the set ${\sf E}=\{E_j,  j\in\mathbb{J}\}$  in $\ell^2(\mathbb{J})$:
 \begin{equation}\label{K201}
 \mathcal{E}\{c_j\}=\{E_jc_j\}, \qquad  \mathcal{D}(\mathcal{E})=\{\{c_j\}\in\ell^2(\mathbb{J}) :  \{E_jc_j\}\in\ell^2(\mathbb{J}) \}.
 \end{equation}
 Since the set ${\sf E}$ is unbounded (i.e. $\sup_{j\in\mathbb{J}}\{|E_j|\}=\infty$)
the self-adjoint operator $\mathcal{E}$ is unbounded in $\ell^2(\mathbb{J})$ and, by the extended version \cite[Theorem 3.19]{AZ} 
of the Schm\"{u}dgen theorem \cite[Theorem 5.1]{S1}, there are uncountably many
infinite-dimensional subspaces $\K$ of  $\ell^2(\mathbb{J})$ such that 
\begin{equation}\label{palermo2023}
\mathcal{D}(\mathcal{E})\cap\K=\{0\}.
\end{equation}
For given $\K$ satisfying \eqref{palermo2023}  we denote by $P$ the orthogonal projection operator in  $\ell_2(\mathbb{J})$
on $\K$ and consider the canonical ONB  $\{e_j, \ j\in\mathbb{J}\}$ of 
$\ell_2(\mathbb{J})$.
Then $\F_\varphi=\{\varphi_j=Pe_j, \ j\in\mathbb{J}\}$ is
a PF in $\mathcal{K}$. The associated analysis operator $\theta_\varphi$ (see \eqref{K31}) 
maps $\K$ into $\ell^2(\mathbb{J})$  and 
$$
\theta_\varphi{f}=\{\langle \varphi_j, f \rangle\}_{j\in\mathbb{J}}=\{\langle e_j, f \rangle\}_{j\in\mathbb{J}}=\{c_j\}=f, \qquad  f=\{c_j\}\in\K.
$$
Therefore, $\mathcal{R}(\theta_\varphi)=\K$ and, in view of \eqref{K201}, \eqref{palermo2023},  $\{E_jc_j\}\not\in\ell^2(\mathbb{J})$ 
 for non-zero $f=\{c_j\}\in\mathcal{K}$. This means that
$$
\sum_{j\in\mathbb{J}}E_j^2 |\langle \varphi_j,  f\rangle|^2=\sum_{j\in\mathbb{J}}E_j^2 |\langle e_j,  f\rangle|^2=\sum_{j\in\mathbb{J}}E_j^2 |c_j|^2=\infty, \quad f\in\K, \ f\not=0
$$
and  $\mathcal{D}_{min}=\{0\}$.
\end{proof}

Of course, the choice of PF's $\F_\varphi$ in Proposition \ref{prop1} should be very specific. 
{Considering special classes of ${\sf E}$ and $\F_\varphi$ 
one can guarantee that  $\mathcal{D}_{min}$ coincides with $\mathcal{D}_{max}$ and  is a dense set in $\K$.} Few simple sufficient conditions are given below. 

\begin{prop}\label{gu11bb} 
The following assertions are true:
\begin{enumerate}
\item[(i)] if $\sup_{j\in\mathbb{J}}\{|E_j|\}<\infty$, then  
$\mathcal{D}_{min}=\mathcal{D}_{max}=\K$;
\item[(ii)] if  the index set $\mathbb{J}$ of  $\F_\varphi$ can be decomposed $\mathbb{J}=\mathbb{J}_0\cup\mathbb{J}_1$ in such a way that 
$\{\varphi_j, j\in\mathbb{J}_0\}$ is a Riesz basis in $\K$ and $\sup_{j\in\mathbb{J}_1}\{|E_j|\}<\infty$,  then $\mathcal{D}_{min}$ coincides with $\mathcal{D}_{max}$
and is a dense set in $\K$.
\end{enumerate} 
\end{prop}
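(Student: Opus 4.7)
The plan is to treat parts (i) and (ii) separately. For (i), the argument is a one-line application of the Parseval identity \eqref{uman1}: if $M=\sup_{j\in\mathbb{J}}|E_j|<\infty$, then $\sum_j E_j^2|\langle\varphi_j,f\rangle|^2\le M^2\|f\|^2$ for every $f\in\K$, so $\mathcal{D}_{min}=\K$, and combined with \eqref{uman2} and $\mathcal{D}_{max}\subseteq\K$ all three sets coincide.

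For (ii), set $M=\sup_{j\in\mathbb{J}_1}|E_j|$ and partition every sum as $\sum_{\mathbb{J}}=\sum_{\mathbb{J}_0}+\sum_{\mathbb{J}_1}$. The $\mathbb{J}_1$ contribution is always tame: since $\F_\varphi$ is a PF (hence Bessel with bound $1$), one has $\sum_{j\in\mathbb{J}_1}E_j^2|\langle\varphi_j,f\rangle|^2\le M^2\|f\|^2$ for every $f\in\K$, and therefore $\sum_{j\in\mathbb{J}_1}E_j\langle\varphi_j,f\rangle\,\varphi_j$ also converges unconditionally (the synthesis operator of a Bessel sequence is bounded on $\ell^2$). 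Thus both conditions defining $\mathcal{D}_{min}$ and $\mathcal{D}_{max}$ reduce to the analogous conditions restricted to the Riesz-basis part indexed by $\mathbb{J}_0$.

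To prove $\mathcal{D}_{max}\subseteq\mathcal{D}_{min}$, I would invoke the standard dichotomy for Riesz bases: a series $\sum_{j\in\mathbb{J}_0}c_j\varphi_j$ converges unconditionally only if $\{c_j\}\in\ell^2(\mathbb{J}_0)$, which one sees by pairing the limit with the biorthogonal Riesz basis $\{\tilde\varphi_k\}_{k\in\mathbb{J}_0}$ to recover $c_k=\langle\tilde\varphi_k,\cdot\rangle$ and invoking boundedness of the dual analysis operator. Since any unconditionally convergent series admits unconditionally convergent subseries, for $f\in\mathcal{D}_{max}$ this forces $\{E_j\langle\varphi_j,f\rangle\}_{j\in\mathbb{J}_0}\in\ell^2(\mathbb{J}_0)$, and combined with the $\mathbb{J}_1$ estimate, $f\in\mathcal{D}_{min}$.

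For density, take finite linear combinations $g=\sum_{k\in F}a_k\tilde\varphi_k$ of the dual Riesz basis. Biorthogonality forces $\langle\varphi_j,g\rangle$ to vanish for all $j\in\mathbb{J}_0\setminus F$, so the $\mathbb{J}_0$-sum is finite, while the $\mathbb{J}_1$-sum is bounded by $M^2\|g\|^2$; hence $g\in\mathcal{D}_{min}$, and the linear span of $\{\tilde\varphi_k\}_{k\in\mathbb{J}_0}$ is already dense in $\K$. The only step requiring care is the Riesz-basis characterization of unconditional convergence used above, which rests on the synthesis operator of a Riesz basis being a topological isomorphism $\ell^2(\mathbb{J}_0)\to\K$; once this is granted, the argument proceeds uniformly, regardless of whether $\mathbb{J}_1$ is finite or infinite.
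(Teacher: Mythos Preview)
Your proof is correct and follows essentially the same route as the paper. Part~(i) is identical. In part~(ii) you both reduce to the Riesz-basis subfamily $\{\varphi_j\}_{j\in\mathbb{J}_0}$ after observing that the $\mathbb{J}_1$-tail is harmless, and you both establish density via the biorthogonal Riesz basis. The only difference is cosmetic: where the paper invokes \cite[Theorem~8.36]{heil} (norm-bounded-below frames satisfy $\mathcal{D}_{min}=\mathcal{D}_{max}$), you unpack this directly for Riesz bases by using that the dual analysis operator is bounded, which recovers $\{E_j\langle\varphi_j,f\rangle\}_{j\in\mathbb{J}_0}\in\ell^2$ from the (unconditional) convergence of the $\mathbb{J}_0$-subseries. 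Your argument is thus slightly more self-contained, but the strategies are the same.
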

\begin{proof} $(i)$
Denote $\alpha=\sup_{j\in\mathbb{J}}\{|E_j|\}$. Then
\begin{equation}\label{k24}
\sum_{j\in\mathbb{J}}E_j^2 |\langle \varphi_j,  f\rangle|^2\leq\alpha^2\sum_{j\in\mathbb{J}}|\langle \varphi_j,  f\rangle|^2=\alpha^2\|f\|^2, \qquad f\in\K.
\end{equation}
Therefore, $\mathcal{D}_{min}=\K$. By virtue of \eqref{uman2},  $\mathcal{D}_{min}=\mathcal{D}_{max}=\K$.

$(ii)$ Similarly to \eqref{k24},  $\sum_{j\in\mathbb{J}_1}E_j^2 |\langle \varphi_j,  f\rangle|^2\leq\alpha^2\sum_{j\in\mathbb{J}_1}|\langle \varphi_j,  f\rangle|^2\leq\alpha^2\|f\|^2$, 
where $\alpha=\sup_{j\in\mathbb{J}_1}\{|E_j|\}$ and  $f\in\K.$ This means that the sets  $\mathcal{D}_{min}$ and $\mathcal{D}_{max}$ defined by \eqref{uman6}
do not change if we consider the smaller set $\mathbb{J}_0=\mathbb{J}\setminus\mathbb{J}_1$ instead of $\mathbb{J}$.

Each Riesz basis $\{\varphi_j,  j\in\mathbb{J}_0\}$ is norm-bounded below (i.e.,  $\inf_{j\in\mathbb{J}}\|\varphi_j\|>0$). By
\cite[Theorem 8.36]{heil} this means that  $\mathcal{D}_{min}=\mathcal{D}_{max}$.  
Let $\{\psi_j,  j\in\mathbb{J}_0\}$ be the biorthogonal Riesz basis for $\{\varphi_j,  j\in\mathbb{J}_0\}$. By virtue of \eqref{uman6}, 
$\psi_j\in\mathcal{D}_{min}$ for all $j\in\mathbb{J}_0$. Hence, $\mathcal{D}_{min}$ 
 is a dense set in $\K$.  
\end{proof}

In what follows we consider a slightly more general case assuming that \textit{$\mathcal{D}_{min}$ is dense in $\K$ 
and $\mathcal{D}_{min}\subseteq\mathcal{D}_{max}$, so that $\mathcal{D}_{max}$ is dense in $\K$ a fortiori.}

Equipping \eqref{K5} with domains $ \mathcal{D}_{min}$ and $\mathcal{D}_{max}$
we define the following operators in $\K$:
\begin{equation}\label{K22}
\begin{array}{l}
  H_{{\varphi}\sf E}^{min}f=\sum_{j\in\mathbb{J}}E_j \langle \varphi_j, f\rangle\,\varphi_j, \qquad f\in\mathcal{D}(H_{{\varphi}\sf E}^{min})=\mathcal{D}_{min},  \vspace{3mm} \\
H_{{\varphi}\sf E}^{max}f=\sum_{j\in\mathbb{J}}E_j \langle \varphi_j, f\rangle\,\varphi_j, \qquad f\in\mathcal{D}(H_{{\varphi}\sf E}^{max})=\mathcal{D}_{max}.
\end{array}
\end{equation}

The operator $H_{{\varphi}\sf E}^{min}$  admits  a simple interpretation with the use
of Naimark dilation  theorem.  Namely, in the Hilbert space $\Hil=\K\oplus\M$, we consider
the ONB $\{h_j, \  j\in \mathbb{J}\}$ defined by \eqref{K3b}
and define a self-adjoint Hamiltonian
\begin{equation}\label{K5b}
H_{h{\sf E}}=\sum_{j\in\mathbb{J}}E_j \langle h_j, \cdot \rangle\,h_j,  \qquad  \mathcal{D}(H_{h{\sf E}})=\{f\in\Hil \ : \ \sum_{j\in\mathbb{J}}E_j^2 |\langle h_j, f\rangle|^2<\infty\}.
\end{equation} 
By the construction,  the set of eigenvalues of $H_{h{\sf E}}$ coincides with ${\sf E}=\{E_j\}$ and the corresponding eigenfunctions are $\{h_j\}$. 
Comparing \eqref{K22} and \eqref{K5b} we arrive at the conclusion that
\begin{equation}\label{important} 
H_{{\varphi}\sf E}^{min}f=P_{\K}H_{h{\sf E}}f, \qquad f\in\mathcal{D}(H_{{\varphi}\sf E}^{min})=\mathcal{D}_{min}=\mathcal{D}(H_{h{\sf E}})\cap{\K}.
\end{equation}
Therefore, the operator $H_{{\varphi}\sf E}^{min}$ may be interpreted as
the restriction of the Hamiltonian $H_{h{\sf E}}$ acting in $\Hil=\K\oplus\M$ to the physical space $\Hil_{ph}=\K$, see \eqref{K2}.

It follows from  \eqref{uman2} that 
\begin{equation}\label{uman63}
H_{{\varphi}\sf E}^{min}\subseteq{H_{{\varphi}\sf E}^{max}}.
\end{equation}
The densely defined operator   $H_{{\varphi}\sf E}^{max}$ is symmetric in $\K$  since
$$
\langle H_{{\varphi}\sf E}^{max}f, f\rangle=\sum_{j\in\mathbb{J}}E_j |\langle \varphi_j, f\rangle|^2,  \qquad f\in\mathcal{D}(H_{{\varphi}\sf E}^{max}) 
$$
is real-valued. The same is true for $H_{{\varphi}\sf E}^{min}$ since \eqref{uman63} holds.
\begin{cor}\label{dd1}
If $H_{{\varphi}\sf E}^{min}$ is self-adjoint in $\K$, then $H_{{\varphi}\sf E}^{max}$ is also self-adjoint and $H_{{\varphi}\sf E}^{min}=H_{{\varphi}\sf E}^{max}$. 
\end{cor}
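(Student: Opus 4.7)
The plan is to invoke the standard fact that a self-adjoint operator is maximal among symmetric operators: if $A$ is self-adjoint, $B$ is symmetric, and $A\subseteq B$, then $A=B$. All the necessary ingredients have already been assembled in the discussion preceding the corollary.

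First, I would recall the inclusion \eqref{uman63}, namely $H_{{\varphi}\sf E}^{min}\subseteq H_{{\varphi}\sf E}^{max}$, together with the fact established just above the corollary that $H_{{\varphi}\sf E}^{max}$ is symmetric (its quadratic form $\sum_j E_j|\langle\varphi_j,f\rangle|^2$ is real-valued). Under the hypothesis that $H_{{\varphi}\sf E}^{min}$ is self-adjoint we have $(H_{{\varphi}\sf E}^{min})^{*}=H_{{\varphi}\sf E}^{min}$.

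Next, I would apply the order-reversing property of the adjoint to the inclusion $H_{{\varphi}\sf E}^{min}\subseteq H_{{\varphi}\sf E}^{max}$, obtaining $(H_{{\varphi}\sf E}^{max})^{*}\subseteq (H_{{\varphi}\sf E}^{min})^{*}=H_{{\varphi}\sf E}^{min}$. Combining this with the symmetry of $H_{{\varphi}\sf E}^{max}$, which reads $H_{{\varphi}\sf E}^{max}\subseteq (H_{{\varphi}\sf E}^{max})^{*}$, yields the chain
\[
H_{{\varphi}\sf E}^{min}\subseteq H_{{\varphi}\sf E}^{max}\subseteq (H_{{\varphi}\sf E}^{max})^{*}\subseteq H_{{\varphi}\sf E}^{min},
\]
forcing every inclusion to be an equality. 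In particular $H_{{\varphi}\sf E}^{min}=H_{{\varphi}\sf E}^{max}=(H_{{\varphi}\sf E}^{max})^{*}$, so $H_{{\varphi}\sf E}^{max}$ is self-adjoint and coincides with $H_{{\varphi}\sf E}^{min}$.

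There is essentially no obstacle here: the whole argument is a two-line application of the adjoint functoriality, and the genuine work has already been done in verifying the symmetry of $H_{{\varphi}\sf E}^{max}$ and the inclusion \eqref{uman63}. If anything required a moment of care, it would be justifying the step $(H_{{\varphi}\sf E}^{min})^{*}=H_{{\varphi}\sf E}^{min}$, but this is immediate from the self-adjointness hypothesis.
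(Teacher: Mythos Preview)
Your proof is correct and follows exactly the same approach as the paper: the paper simply writes the chain $H_{{\varphi}\sf E}^{min}\subseteq H_{{\varphi}\sf E}^{max}\subseteq (H_{{\varphi}\sf E}^{max})^{*}\subseteq (H_{{\varphi}\sf E}^{min})^{*}=H_{{\varphi}\sf E}^{min}$ and declares the result immediate. Your version is just a more verbose unpacking of the same inclusions.
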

\begin{proof}
It follows immediately from the relation
$$
H_{{\varphi}\sf E}^{min}\subseteq{H_{{\varphi}\sf E}^{max}}\subseteq(H_{{\varphi}\sf E}^{max})^*\subseteq(H_{{\varphi}\sf E}^{min})^*=H_{{\varphi}\sf E}^{min}.
$$
\end{proof}

 Using Proposition \ref{gu11bb}, it is easy to derive sufficient conditions for self-adjointness of
$H_{{\varphi}\sf E}^{min}$. 

\begin{cor}\label{uman7} 
The following assertions are true:
\begin{enumerate}
\item[(i)] if $\sup_{j\in\mathbb{J}}\{|E_j|\}<\infty$, then $H_{{\varphi}\sf E}^{min}$ is a bounded self-adjoint operator in $\K$; 
\item[(ii)] if $\sup_{j\in\mathbb{J}}\{|E_j|\}=\infty$ and the index set $\mathbb{J}$ of  $\F_\varphi$ can be decomposed $\mathbb{J}=\mathbb{J}_0\cup\mathbb{J}_1$ in such a way that 
$\{\varphi_j, j\in\mathbb{J}_0\}$ is a Riesz basis in $\K$ and $\sup_{j\in\mathbb{J}_1}\{|E_j|\}<\infty$,  then
$H_{{\varphi}\sf E}^{min}$ is an unbounded self-adjoint operator in $\K$.
\end{enumerate} 
\end{cor}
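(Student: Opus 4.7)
For part (i), the approach is immediate from Proposition~\ref{gu11bb}(i), which gives $\mathcal{D}_{min}=\K$, so the operator is everywhere defined and already symmetric (as noted in the discussion preceding Corollary~\ref{dd1}). To establish boundedness I would write $H_{{\varphi}\sf E}^{min}f=\theta_\varphi^*(\{E_j\langle\varphi_j,f\rangle\})$ and invoke that $\theta_\varphi^*$ is a contraction (it is the adjoint of an isometry), which yields $\|H_{{\varphi}\sf E}^{min}f\|\le\alpha\|f\|$ with $\alpha=\sup_j|E_j|$. A bounded symmetric operator defined on all of $\K$ is self-adjoint, finishing (i).

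For part (ii), the strategy is to split $H_{{\varphi}\sf E}^{min}=H_0+H_1$ along the decomposition $\mathbb{J}=\mathbb{J}_0\cup\mathbb{J}_1$. The piece $H_1=\sum_{j\in\mathbb{J}_1}E_j\langle\varphi_j,\cdot\rangle\varphi_j$ is bounded self-adjoint by exactly the contraction estimate used in (i), applied to the partial sum over $\mathbb{J}_1$ (the restriction of $\theta_\varphi^*$ to $\ell^2(\mathbb{J}_1)$ still has norm $\le 1$). The real work lies in $H_0=\sum_{j\in\mathbb{J}_0}E_j\langle\varphi_j,\cdot\rangle\varphi_j$. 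Here I would exploit the Riesz basis structure: pick an auxiliary ONB $\{e_j,j\in\mathbb{J}_0\}$ of $\K$ and define the bounded invertible operator $T\in B(\K)$ by $Te_j=\varphi_j$. A short calculation shows $T^*f=\sum_{j\in\mathbb{J}_0}\langle\varphi_j,f\rangle e_j$ and $H_0f=TAT^*f$, where $A=\sum_{j\in\mathbb{J}_0}E_j\langle e_j,\cdot\rangle e_j$ is a diagonal (hence self-adjoint) multiplication operator in this ONB. The natural domain $\{f:T^*f\in\mathcal{D}(A)\}$ coincides with $\mathcal{D}_{min}$, since the $\mathbb{J}_1$-part of the defining sum is controlled by the Parseval bound.

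The main technical step, and the only real obstacle in this proof, is self-adjointness of $S:=TAT^*$ for $A$ self-adjoint and $T$ bounded with bounded inverse. My plan is a direct computation of $S^*$: for $g\in\mathcal{D}(S^*)$ with $S^*g=h$, the identity $\langle AT^*f,T^*g\rangle=\langle f,h\rangle$ on $\mathcal{D}(S)$, together with the substitution $u:=T^*f$ (which sweeps $\mathcal{D}(A)$ as $f$ runs over $\mathcal{D}(S)$, since $T^*$ is a bijection), forces $T^*g\in\mathcal{D}(A^*)=\mathcal{D}(A)$ and $h=TAT^*g$, so $g\in\mathcal{D}(S)$. Once $H_0=H_0^*$ is in hand, the Kato--Rellich theorem in its trivial bounded-perturbation case gives self-adjointness of $H_{{\varphi}\sf E}^{min}=H_0+H_1$ on $\mathcal{D}(H_0)=\mathcal{D}_{min}$. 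Finally, unboundedness follows from testing on the biorthogonal vectors $\psi_j=(TT^*)^{-1}\varphi_j$ ($j\in\mathbb{J}_0$): one checks $H_0\psi_j=E_j\varphi_j$, and the Riesz basis inequalities bound $\|\psi_j\|$ above and $\|\varphi_j\|$ below by positive constants, so $\|H_0\psi_j\|/\|\psi_j\|$ grows like $|E_j|$; the hypothesis $\sup_{\mathbb{J}_1}|E_j|<\infty$ combined with $\sup_\mathbb{J}|E_j|=\infty$ forces $|E_j|\to\infty$ along a subsequence in $\mathbb{J}_0$.
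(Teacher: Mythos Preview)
Your proposal is correct and follows essentially the same route as the paper: split $H_{{\varphi}\sf E}^{min}=A_0+A_1$ along $\mathbb{J}_0\cup\mathbb{J}_1$, handle $A_1$ as a bounded self-adjoint perturbation, and prove $A_0$ self-adjoint by factoring through the Riesz basis. The only cosmetic difference is that the paper uses the symmetric factorization $A_0=S_0^{1/2}H_{e{\sf E}_0}S_0^{1/2}$ via the frame operator $S_0$ of the Riesz basis, whereas you use the equivalent $A_0=TAT^*$ with a general bounded invertible $T$; your direct computation of $(TAT^*)^*$ is exactly the step the paper compresses into ``relation \eqref{bbb2} means that $A_0$ is self-adjoint''.
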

\begin{proof} $(i)$. By Proposition \ref{gu11bb} and \eqref{K22}, the symmetric operator $H_{{\varphi}\sf E}^{min}$ is defined on the whole space $\K$. Hence,
 $H_{{\varphi}\sf E}^{min}$ is a bounded self-adjoint operator. 

$(ii)$. By employing Proposition \ref{gu11bb} once again,  one gets that 
$H_{\varphi\sf E}^{min}$ is a densely defined operator in $\K$.
 For $f\in\mathcal{D}(H_{{\varphi}\sf E})$ we decompose
 \begin{equation}\label{HH1} 
  H_{{\varphi}\sf E}f=\sum_{j\in\mathbb{J}_0}E_j \langle \varphi_j, f\rangle\,\varphi_j+\sum_{j\in\mathbb{J}_1}E_j \langle \varphi_j, f\rangle\,\varphi_j, 
  \end{equation} 
where $\F_\varphi^0=\{\varphi_j, \ j\in\mathbb{J}_0\}$ is a Riesz basis in $\K$ and
 $\sup_{j\in\mathbb{J}_1}\{|E_j|\}<\infty$.

Denote by $S_0=\sum_{j\in\mathbb{J}_0}\langle \varphi_j, \cdot\rangle\,\varphi_j$ the frame operator of the  Riesz basis  $\F_\varphi^0$. Then  \cite[Theorem 6.1.1]{chri}
  $$
  \varphi_j=S_0^{1/2}e_j, \quad j\in\mathbb{J}_0,
  $$
  where $\{e_j, \ j\in\mathbb{J}_0\}$ is an ONB  of $\K$. 
Therefore, the first operator in  \eqref{HH1},  $$
A_0=\sum_{j\in\mathbb{J}_0}E_j \langle \varphi_j, \cdot\rangle\,\varphi_j, \quad \mathcal{D}(A_0)=\{f\in\K \ : \ \sum_{j\in\mathbb{J}_0}E_j^2 |\langle \varphi_j,  f\rangle|^2<\infty\}
$$
 can be rewritten as follows
 \begin{equation}\label{bbb1}
 A_0=S_0^{1/2}H_{e{\sf E}_0}S_0^{1/2}, \qquad H_{e{\sf E}_0}=\sum_{j\in\mathbb{J}_0}E_j \langle e_j, \cdot\rangle\,e_j, \quad {\sf E}_0=\{E_j, j\in\mathbb{J}_0\}.
 \end{equation}
By the construction, the operator $H_{e{\sf E}_0}$ with the domain 
 $$\mathcal{D}(H_{e{\sf E}_0})=\{f\in\K \ : \ \sum_{j\in\mathbb{J}_0}E_j^2 |\langle e_j,  f\rangle|^2<\infty\}$$ is self-adjoint in $\K$.
In view of \eqref{bbb1}, $\mathcal{D}(H_{e{\sf E}_0})=S_0^{{1}/2}\mathcal{D}(A_0)$ and 
\begin{equation}\label{bbb2}
\langle A_0f, g\rangle=\langle H_{e{\sf E}_0}S_0^{1/2}f, S_0^{1/2}g\rangle, \qquad f,g\in\mathcal{D}(A_0).
\end{equation}
Relation \eqref{bbb2} means that  $A_0$ is self-adjoint. Moreover, as follows from the proof of Proposition \ref{gu11bb},  
$\mathcal{D}(A_0)=\{f\in\K \ : \ \sum_{j\in\mathbb{J}_0}E_j^2 |\langle \varphi_j,  f\rangle|^2<\infty\}=\mathcal{D}_{min}=\mathcal{D}(H_{\varphi{\sf E}}^{min})$.

On the other hand,    
$$
\sum_{j\in\mathbb{J}_1}E_j^2 |\langle \varphi_j,  f\rangle|^2\leq\alpha^2\sum_{j\in\mathbb{J}_1}|\langle \varphi_j,  f\rangle|^2<\alpha^2\|f\|^2, \quad f\in\K, \quad  \alpha=\sup_{j\in\mathbb{J}_1}\{|E_j|\}.
$$
This means (see, e.g., \cite[Theorem 7.2]{heil}) that  the second operator in \eqref{HH1} 
  $$
  A_1=\sum_{j\in\mathbb{J}_1}E_j \langle \varphi_j, \cdot\rangle\,\varphi_j
  $$ 
is defined on $\K$ and it is  symmetric 
(since  $\langle A_1f, f \rangle=\sum_{j\in\mathbb{J}_1}E_j |\langle \varphi_j, f\rangle|^2$ is real-valued for $f\in\K$). Hence, $A_1$ is a bounded self-adjoint operator in $\K$.
 This means that the operator $H_{{\varphi}\sf E}=A_0+A_1$ with the domain $\mathcal{D}(A_0)=\mathcal{D}(H_{{\varphi}\sf E}^{min})$
is  self-adjoint. 
\end{proof}
\begin{rem}
Similar  results (in the case where $\F_{\varphi}$ involves a Riesz basis) can be obtained by the perturbation theory methods if the operator $A_1$ is sufficiently small with respect to the self-adjoint operator $A_0$ (see, e.g., \cite[X.2]{ReedSimonII}).
\end{rem}

\subsection{Spectrum of $H_{{\varphi}\sf E}$.}\label{sII.4}

In what follows, \emph{we suppose that $H_{{\varphi}\sf E}^{min}$ is a self-adjoint operator}
in $\K$. In this case, in view of {Corollary \ref{dd1}, $H_{{\varphi}\sf E}:=H_{{\varphi}\sf E}^{min}=H_{{\varphi}\sf E}^{max}$ is a self-adjoint} 
operator.
 
We consider  $H_{{\varphi}\sf E}$  as 
\emph{a Hamiltonian generated by the pair $(\F_\varphi, {\sf E})$ of a PF $\F_\varphi$ and a set of real numbers ${\sf E}$}.
As was mentioned above, the Hamiltonian $H_{{\varphi}\sf E}$ is  the restriction to the physical space $\K$ of the Hamiltonian $H_{h{\sf E}}$ (see \eqref{K5b})  acting in the wider Hilbert space $\Hil=\K\oplus\M$.
By  virtue of \eqref{K5b}, the point spectrum of $H_{h{\sf E}}$ coincides with the set ${\sf E}$ and $H_{h{\sf E}}h_n=E_nh_n$ \ $(n\in\mathbb{J})$.  On the other hand: 

\begin{lemma}\label{palermo1} 
The following assertions are equivalent:
\begin{enumerate}
\item[(i)] the relation $H_{{\varphi}\sf E}\varphi_n=E_n\varphi_n$ holds for some $E_n\in{\sf E}$; 
\item[(ii)] the vectors of the PF $\F_\psi$ in Theorem \ref{Naimark}  satisfy the following conditions: 
\begin{equation}\label{uman1963}
\sum_{j\in\mathbb{J}}E_j^2|\langle\psi_j, \psi_n\rangle|^2<\infty, \qquad  \sum_{j\in\mathbb{J}}E_j\langle\psi_j, \psi_n\rangle\varphi_j=0
\end{equation}
\end{enumerate} 
\end{lemma}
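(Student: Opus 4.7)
The argument rests on a single identity from the Naimark dilation: since $\F_h=\{h_j=\varphi_j\oplus\psi_j\}$ is an ONB of $\Hil=\K\oplus\M$, one has
\[
\delta_{jn}=\langle h_j,h_n\rangle=\langle\varphi_j,\varphi_n\rangle+\langle\psi_j,\psi_n\rangle,\qquad j,n\in\mathbb{J},
\]
so the coefficients appearing in \eqref{K5} can be rewritten as $\langle\varphi_j,\varphi_n\rangle=\delta_{jn}-\langle\psi_j,\psi_n\rangle$. Everything else is bookkeeping built on this substitution.

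First I would show that $\varphi_n\in\mathcal{D}_{min}$ is equivalent to the first summability condition in (ii). Indeed, by the identity above, for $j\neq n$ one has $|\langle\varphi_j,\varphi_n\rangle|^2=|\langle\psi_j,\psi_n\rangle|^2$, so the two series $\sum_jE_j^2|\langle\varphi_j,\varphi_n\rangle|^2$ and $\sum_jE_j^2|\langle\psi_j,\psi_n\rangle|^2$ agree off the diagonal and differ in the $j=n$ term by an amount bounded by $E_n^2$. They are therefore simultaneously convergent, so membership $\varphi_n\in\mathcal{D}_{min}=\mathcal{D}(H_{\varphi\sf E})$ is exactly the first condition in (ii).

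Next, assuming this summability holds, the sequence $\{E_j\langle\psi_j,\psi_n\rangle\}_{j\in\mathbb{J}}$ belongs to $\ell^2(\mathbb{J})$, so the synthesis operator $\theta_\varphi^*$ (bounded from $\ell^2(\mathbb{J})$ to $\K$ by \eqref{K12b}) produces a convergent vector $\sum_jE_j\langle\psi_j,\psi_n\rangle\varphi_j\in\K$. Substituting the rewritten coefficients into the definition of $H_{\varphi\sf E}\varphi_n$ yields
\[
H_{\varphi\sf E}\varphi_n=\sum_jE_j\langle\varphi_j,\varphi_n\rangle\varphi_j=E_n\varphi_n-\sum_jE_j\langle\psi_j,\psi_n\rangle\varphi_j,
\]
and the eigenvalue equation in (i) collapses precisely to the vanishing of the second series in (ii). Combining both steps gives (i) $\Leftrightarrow$ (ii).

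As a conceptual check I would verify consistency with \eqref{important}: writing $\varphi_n\oplus 0=h_n-(0\oplus\psi_n)$ and using $H_{h\sf E}h_n=E_nh_n$,
\[
H_{\varphi\sf E}\varphi_n=P_{\K}H_{h\sf E}(\varphi_n\oplus 0)=E_n\varphi_n-P_{\K}H_{h\sf E}(0\oplus\psi_n),
\]
and expanding $H_{h\sf E}(0\oplus\psi_n)=\sum_jE_j\langle\psi_j,\psi_n\rangle h_j$ in the ONB $\{h_j\}$ reproduces the same expression, with the domain condition for $0\oplus\psi_n\in\mathcal{D}(H_{h\sf E})$ matching the first part of (ii). The only real subtlety is justifying convergence of the rearranged series, which is handled by the $\ell^2$-observation in the first step together with boundedness of $\theta_\varphi^*$; no deeper estimate is required.
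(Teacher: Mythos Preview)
Your proof is correct and rests on the same identity the paper uses, namely $\langle\varphi_j,\varphi_n\rangle=\delta_{jn}-\langle\psi_j,\psi_n\rangle$ from the Naimark dilation. The paper packages the argument through the dilation Hamiltonian $H_{h\sf E}$ and the relation \eqref{important} (your ``conceptual check'' paragraph is essentially their proof), whereas you work directly at the level of coefficients; the content is the same.
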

\begin{proof} $(i)\to(ii)$.
 If $H_{{\varphi}\sf E}\varphi_n=E_n\varphi_n$, then $\varphi_n\in\mathcal{D}(H_{{\varphi}\sf E})\cap\mathcal{D}(H_{h\sf E})$ (see \eqref{important}).
This means that the vector $\psi_n$ corresponding to  $\varphi_n$ in \eqref{K3b} also belongs to $\mathcal{D}(H_{h\sf E})$. Therefore,
$\sum_{j\in\mathbb{J}}E_j^2|\langle\psi_j, \psi_n\rangle|^2=\sum_{j\in\mathbb{J}}E_j^2|\langle{h}_j, \psi_n\rangle|^2<\infty.$ 
Further, the relation $H_{{h}\sf E}h_n=E_nh_n$ means that $P_{\K}H_{{h}\sf E}h_n=E_j\varphi_n$. Taking \eqref{K3b} and \eqref{K5b} into account, one gets
\begin{equation}\label{fufu1}
0=P_{\K}H_{{h}\sf E}h_n-H_{{\varphi}\sf E}\varphi_n=\sum_{j\in\mathbb{J}}E_j\langle h_j, h_n\rangle\varphi_j-\sum_{j\in\mathbb{J}}E_j\langle \varphi_j, \varphi_n\rangle\varphi_j=\sum_{j\in\mathbb{J}}E_j\langle \psi_j, \psi_n\rangle\varphi_j
\end{equation}
that establishes \eqref{uman1963}.

$(ii)\to(i)$. The first part in \eqref{uman1963} and \eqref{K5b} mean that
$\psi_n\in\mathcal{D}(H_{h\sf E})$. Hence, $\varphi_n=h_n-\psi_n$ belongs to $\mathcal{D}(H_{{\varphi}\sf E})\cap\mathcal{D}(H_{h\sf E})$.
Reasoning similarly to \eqref{fufu1},  we obtain
$$
0=\sum_{j\in\mathbb{J}}E_j\langle \psi_j, \psi_n\rangle\varphi_j=P_{\K}H_{{h}\sf E}h_n-H_{{\varphi}\sf E}\varphi_n=E_nP_{\K}h_n-H_{{\varphi}\sf E}\varphi_n=E_n\varphi_n-H_{{\varphi}\sf E}\varphi_n
$$
that completes the proof.
\end{proof}

If $\|\varphi_n\|=1$, then $\psi_n=0$ and the conditions \eqref{uman1963} are clearly satisfied.  In such case, $E_n$ turns out to be an eigenvalue of $H_{{\varphi}\sf E}$.

Lemma \ref{palermo1} shows that  
the set $\sf E$ not always coincides with the point spectrum of $H_{{\varphi}\sf E}$. 
For this reason, we will say that ${\sf E}$ is a set of \emph{quasi-eigenvalues} of $H_{\varphi{\sf E}}$.

Since $H_{{\varphi}\sf E}$ is assumed to be self-adjoint, one may believe that  $H_{{\varphi}\sf E}$ can also be presented in a form similar to \eqref{K5b}:
\begin{equation}\label{wu2b}
H_{{\varphi}\sf E}=\sum_{j\in\mathbb{J}}\,E_j' \langle {e}_j', \cdot\rangle {e}_j', \qquad \mathcal{D}(H_{{\varphi}\sf E})=\{f\in\K \ : \ \sum_{j\in\mathbb{J}}E_j'^2| \langle {e}_j', f\rangle|^2<\infty\},
\end{equation}  
where $\F_{e'}=\{{e}_j'\in\K, \ j\in{\mathbb{J}}\}$ is an ONB of $\K$ and ${\sf E'}=\{{E}_j', \ j\in {\mathbb{J}}\}$ 
is a set of real numbers. The formula \eqref{wu2b} is more convenient for spectral analysis because it immediately gives the set 
${\sf E'}$ of eigenvalues of $H_{{\varphi}\sf E}$. This means, 
according what proposed in \cite{BK} for bounded operators, that $(\F_\varphi,{\sf E})$ is ${\sf E}-$connected to $\F_{e'}$. 
Indeed, under the assumptions here, 
$H_{{\varphi}\sf E}=\sum_{j\in\mathbb{J}}E_j\langle \varphi_j, \cdot\rangle\varphi_j=\sum_{j\in\mathbb{J}}\,E_j' \langle {e}_j', \cdot\rangle {e}_j'$.

We recall before formulating the next statement that the PF $\F_\varphi$ can be extended to an ONB $\F_h$ in  
$\Hil=\K\oplus\M$ by adding a complementary PF $\F_\psi$ of $\M$ (as in Theorem \ref{Naimark}).

\begin{thm}\label{kkk31}
Let $H_{{\varphi}\sf E}$  be a Hamiltonian generated by the pair $(\F_\varphi, {\sf E})$, {where ${\sf E}=\{E_j, j\in\mathbb{J}\}$ 
is a strictly increasing sequence $\ldots<E_j<{E_{j+1}}\ldots$}
 and let the operator 
\begin{equation}\label{palermo33}
B=\sum_{j\in\mathbb{J}}E_j \langle \varphi_j, \cdot\rangle\,\psi_j \ :  \ \K \to \M
\end{equation}
be bounded. {Then $H_{{\varphi}\sf E}$ has the form \eqref{wu2b} and its discrete spectrum 
$\sigma_{disc}(H_{{\varphi}\sf E})$ coincides with ${\sf E'}$.}
\end{thm}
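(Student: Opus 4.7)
My plan is to exhibit $H_{\varphi{\sf E}}$ as a compression of $H_{h{\sf E}}$, realize the passage from $H_{h{\sf E}}$ to $H_{\varphi{\sf E}}\oplus H_{\psi{\sf E}}$ as a bounded self-adjoint perturbation whose off-diagonal part is built from $B$, and then transfer compactness of the resolvent of $H_{h{\sf E}}$ to $H_{\varphi{\sf E}}$. In the block decomposition $\Hil=\K\oplus\M$ coming from Theorem \ref{Naimark}, I would write
$$
H_{h{\sf E}}=H_0+V,\qquad H_0:=\begin{pmatrix}H_{\varphi{\sf E}} & 0\\ 0 & H_{\psi{\sf E}}\end{pmatrix},\qquad V:=\begin{pmatrix}0 & B^*\\ B & 0\end{pmatrix},
$$
where $H_{\psi{\sf E}}=\sum_{j\in\mathbb{J}}E_j\langle\psi_j,\cdot\rangle\psi_j$ on $\M$. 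Since $B$ is bounded by hypothesis, $V$ is bounded and self-adjoint on $\Hil$, so $H_0=H_{h{\sf E}}-V$ is self-adjoint on $\mathcal{D}(H_{h{\sf E}})$.

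The next step is the domain identification $\mathcal{D}(H_{h{\sf E}})=\mathcal{D}(H_{\varphi{\sf E}})\oplus\mathcal{D}(H_{\psi{\sf E}})$, so that $H_0$ truly equals the orthogonal sum $H_{\varphi{\sf E}}\oplus H_{\psi{\sf E}}$ (both therefore self-adjoint). The inclusion $\supseteq$ follows immediately from \eqref{K5b} and the orthogonality of $\K$ and $\M$ in $\Hil$. The reverse inclusion is where the boundedness of $B$ (and its adjoint $B^*$) is decisive: for $f+g\in\mathcal{D}(H_{h{\sf E}})$ with $f\in\K$, $g\in\M$, the sequence $\{E_j(\langle\varphi_j,f\rangle+\langle\psi_j,g\rangle)\}_{j\in\mathbb{J}}$ lies in $\ell^2(\mathbb{J})$, and one uses the boundedness of $B$, $B^*$ together with the splitting \eqref{abba1} to conclude that $\{E_j\langle\varphi_j,f\rangle\}$ and $\{E_j\langle\psi_j,g\rangle\}$ each lie in $\ell^2(\mathbb{J})$.

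Now, strict monotonicity of $\sf E$ guarantees that the eigenvalues of $H_{h{\sf E}}$ are simple; under the standing unboundedness hypothesis of this section they have no finite accumulation, so
$$
(H_{h{\sf E}}-z)^{-1}=\sum_{j\in\mathbb{J}}(E_j-z)^{-1}\langle h_j,\cdot\rangle h_j,\qquad z\in\mathbb{C}\setminus\mathbb{R},
$$
is compact as a norm-limit of finite-rank operators. Applying the second resolvent identity
$(H_0-z)^{-1}=(H_{h{\sf E}}-z)^{-1}+(H_{h{\sf E}}-z)^{-1}V(H_0-z)^{-1}$
and using boundedness of $V$ and $(H_0-z)^{-1}$, the right-hand side is compact, so $(H_0-z)^{-1}$ is compact. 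Restricting to the $\K$-component of the orthogonal decomposition $H_0=H_{\varphi{\sf E}}\oplus H_{\psi{\sf E}}$, the resolvent $(H_{\varphi{\sf E}}-z)^{-1}$ is compact. A self-adjoint operator with compact resolvent has purely discrete spectrum and an ONB of eigenvectors, which gives the representation \eqref{wu2b} with some ONB $\F_{e'}=\{e_j'\}$ and real eigenvalues ${\sf E'}=\{E_j'\}$, and directly identifies $\sigma_{disc}(H_{\varphi{\sf E}})={\sf E'}$.

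The step I expect to be the main technical obstacle is the domain equality $\mathcal{D}(H_{h{\sf E}})=\mathcal{D}(H_{\varphi{\sf E}})\oplus\mathcal{D}(H_{\psi{\sf E}})$: because $\F_\varphi$, $\F_\psi$ are only Parseval frames and not orthonormal bases, the fact that $Bf$ is a well-defined element of $\M$ for every $f\in\K$ does \emph{not} by itself imply that $\{E_j\langle\varphi_j,f\rangle\}\in\ell^2(\mathbb{J})$, and extracting this from the combined $\ell^2$-condition on $\{E_j(\langle\varphi_j,f\rangle+\langle\psi_j,g\rangle)\}$ requires a careful argument through the analysis and synthesis operators $\theta_\varphi$, $\theta_\psi$ introduced in subsection \ref{sII.1}.
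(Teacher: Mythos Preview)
Your approach via bounded self-adjoint perturbation and the second resolvent identity is genuinely different from the paper's. The paper instead invokes the criterion of \cite[Theorem~XIII.64]{ReedSimonIV}: a semi-bounded self-adjoint operator $A$ has compact resolvent iff the sets $Y_{b,A}=\{f\in\mathcal D(A):\|f\|\le1,\ \|Af\|\le b\}$ are compact for every $b\ge0$. Using only the relation $P_{\M}H_{h{\sf E}}f=Bf$ for $f\in\mathcal D(H_{\varphi{\sf E}})$, the paper shows $Y_{b,H_{\varphi{\sf E}}}\subset Y_{\sqrt{b^2+\|B\|^2},\,H_{h{\sf E}}}$ and then checks closedness of the left-hand set. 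Your route is cleaner in spirit, but as written it has two gaps.

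First, the domain equality $\mathcal D(H_{h{\sf E}})=\mathcal D(H_{\varphi{\sf E}})\oplus\mathcal D(H_{\psi{\sf E}})$ is the wrong first step, and proving it directly from boundedness of $B$ is indeed problematic for exactly the reason you give. The fix is to reverse the logic. Define $\widetilde H_0:=H_{h{\sf E}}-V$ on $\mathcal D(H_{h{\sf E}})$; this is self-adjoint because $V$ is bounded self-adjoint. For $f\in\mathcal D(H_{\varphi{\sf E}})=\K\cap\mathcal D(H_{h{\sf E}})$ one has $P_\M H_{h{\sf E}}f=Bf$ (this is precisely the computation the paper does), whence $\widetilde H_0 f=H_{\varphi{\sf E}}f\in\K$. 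Now for $z\notin\mathbb R$ and any $f\in\K$, the vector $u:=(H_{\varphi{\sf E}}-z)^{-1}f\in\mathcal D(H_{\varphi{\sf E}})\subset\mathcal D(\widetilde H_0)$ satisfies $(\widetilde H_0-z)u=f$, so $(\widetilde H_0-z)^{-1}|_\K=(H_{\varphi{\sf E}}-z)^{-1}$. Compactness of $(\widetilde H_0-z)^{-1}$ then transfers to $(H_{\varphi{\sf E}}-z)^{-1}$ with no domain splitting needed; the splitting actually follows \emph{a~posteriori} from $\K$ being reducing for $\widetilde H_0$. So your ``main technical obstacle'' dissolves once you use the standing self-adjointness of $H_{\varphi{\sf E}}$ rather than trying to re-derive it.

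Second, there is no ``standing unboundedness hypothesis'' in this section: the theorem also covers the case $\lambda:=\lim_{j}E_j<\infty$, and there $(H_{h{\sf E}}-z)^{-1}$ is \emph{not} compact (indeed $\lambda\in\sigma_{ess}(H_{h{\sf E}})$), so your resolvent argument does not apply. The paper treats this case separately and more simply: $\lambda I-H_{h{\sf E}}$ is compact because $\lambda-E_j\to0$, hence so is its compression $\lambda I-H_{\varphi{\sf E}}=P_\K(\lambda I-H_{h{\sf E}})P_\K$, and the spectral theorem for compact self-adjoint operators yields \eqref{wu2b}; in this situation $\lambda$ lies in the essential spectrum of $H_{\varphi{\sf E}}$ and ${\sf E}'$ is only its discrete part.
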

\begin{proof} {We can assume, without loss of generality, that $\mathbb{J}=\mathbb{N}$. Since ${\sf E}=\{E_j, j\in\mathbb{N}\}$
is strictly increasing, there exists $\lambda=\lim_{j\to\infty}E_j$.  When $\lambda$ is less than infinity, 
${\sf E}$ is a bounded set, and the condition of boundedness of $B$ in \eqref{palermo33} is automatically fulfilled. Moreover
the self-adjoint operator
$$
\lambda{I}-H_{h{\sf E}}=\sum_{j\in\mathbb{N}}(\lambda-E_j)\langle h_j, \cdot\rangle\,h_j,
$$
where $H_{h{\sf E}}$ is defined by  \eqref{K5b},
is positive,  bounded and compact\footnote{compactness follows from \cite[problem 132]{Halmoshbook}  since $\lim_{j\to\infty}(\lambda-E_j)=0$.} in $\Hil$.
The same properties hold true for the operator $\lambda{I}-H_{\varphi{\sf E}}=P_{\K}(\lambda{I}-H_{h{\sf E}})P_\K$ acting in $\K$.
Therefore, there exists an ONB $\{{e}_j'\}_{j\in\mathbb{N}}$ of $\K$ formed by  eigenvectors ${e}_j'$:
\begin{equation}\label{K998}
(\lambda{I}-H_{\varphi{\sf E}}){e}_j'=\mu_j{e}_j', \qquad j\in\mathbb{N},
\end{equation}
corresponding to the decreasing sequence of eigenvalues $\mu_j$. The set $\{\mu_j\}_{j\in\mathbb{N}}$ constitutes the discrete spectrum of 
$\lambda{I}-H_{\varphi{\sf E}}$. 
It follows from \eqref{K998} that $H_{\varphi{\sf E}}{e}_j'=(\lambda-\mu_j){e}_j'$. Therefore, 
the bounded operator $H_{\varphi{\sf E}}$ can be defined by \eqref{wu2b} with $E_j'=\lambda-\mu_j$ and its discrete spectrum 
$\sigma_{disc}(H_{{\varphi}\sf E})$ coincides with ${\sf E'}=\{ E_j'=\lambda-\mu_j, \ j\in\mathbb{N}\}$.}

{Assume now that $\lambda=\lim_{j\to\infty}E_j=\infty$. Then the operators  $H_{h{\sf E}}$  and $H_{{\varphi}\sf E}$ are both semi-bounded from below, 
and they are interconnected through the  relation \eqref{important}.
 The operator $H_{h{\sf E}}$ has a compact resolvent (since its eigenvectors $\{h_j\}$ form an ONB of $\Hil$ and the corresponding eigenvalues $\{E_j\}$ are an
 increasing sequence tending to $\infty$).}
 In this case, by means of \cite[Theorem XIII.64]{ReedSimonIV}, the set
$$
Y_{b{H_{h{\sf E}}}}=\{f\in\mathcal{D}(H_{h{\sf E}}) : \|f\|\leq{1}, \ \|H_{h{\sf E}}f\|\leq{b}\},
$$
is compact in $\mathcal{H}$ for every $b\geq{0}$. Consider the similar set  associated with $H_{\varphi{\sf E}}$
$$
Y_{b{H_{\varphi{\sf E}}}}=\{f\in\mathcal{D}(H_{\varphi{\sf E}}) : \|f\|\leq{1}, \ \|H_{\varphi{\sf E}}f\|\leq{b}\}.
$$ 
In view of \eqref{important} and \eqref{palermo33},   for all $f\in{Y_{b{H_{\varphi{\sf E}}}}},$
 $$
 P_{\M}H_{h{\sf E}}f=P_{\M}\sum_{j\in\mathbb{N}}E_j \langle h_j, f \rangle\,h_j=\sum_{j\in\mathbb{N}}E_j \langle \varphi_j,  f \rangle\,\psi_j=Bf,  
 $$ where
$P_{\M}$ is the orthogonal projection operator on $\M$ in $\Hil$. Furthermore, 
\begin{equation}\label{uddu}
\|H_{h{\sf E}}f\|^2=\|P_{\K}H_{h{\sf E}}f\|^2+\|P_{\M}H_{h{\sf E}}f\|^2=\|H_{\varphi{\sf E}}f\|^2+\|Bf\|^2\leq{b^2+c^2}, \quad f\in{Y_{b{H_{\varphi{\sf E}}}}},
\end{equation}
where $c=\sup_{f\in{Y_{b{H_{\varphi{\sf E}}}}}}\|Bf\|\leq\|B\|<\infty$. Therefore,  
\begin{equation}\label{uman14}
Y_{b{H_{\varphi{\sf E}}}}\subset{Y_{\sqrt{b^2+c^2}{H_{h{\sf E}}}}}.
\end{equation}

Consider a convergent sequence $\{f_n\}$ in $\K$, where $f_n{\in}Y_{b{H_{\varphi{\sf E}}}}$ and denote $f=\lim{f_n}$.  Obviously,
$f\in\K$ and $\|f\|\leq{1}$.  Moreover, $f\in{Y_{\sqrt{b^2+c^2}{H_{h{\sf E}}}}}$ by virtue of \eqref{uman14} and the compactness of  
${Y_{\sqrt{b^2+c^2}{H_{h{\sf E}}}}}$. 
 This means that  $f\in\mathcal{D}(H_{h{\sf E}})\cap\K=\mathcal{D}(H_{\varphi{\sf E}})$. Assume that $\|H_{\varphi{\sf E}}f\|>b$, i.e. 
 $\|H_{\varphi{\sf E}}f\|^2=b^2+\varepsilon$ ($\varepsilon>0$). By virtue of \eqref{uddu}, $\|Bf\|^2\leq{c^2-\varepsilon}$ that contradicts to the definition of $c$.
 Hence, $\|H_{\varphi{\sf E}}f\|\leq{b}$ and $f\in{Y_{b{H_{\varphi{\sf E}}}}}$.
We verify that  $Y_{b{H_{\varphi{\sf E}}}}$ is a closed set in $\K$.  This fact, the compactness of $Y_{b{H_{h{\sf E}}}}$, and \eqref{uman14} 
mean that $Y_{b{H_{\varphi{\sf E}}}}$  is a compact set in $\K$ for each $b\geq{0}$. Applying again 
\cite[Theorem XIII.64]{ReedSimonIV} we  obtain that $H_{\varphi{\sf E}}$ has a compact resolvent in $\K$. This means 
that formula $\eqref{wu2b}$ holds, the spectrum of  $H_{\varphi{\sf E}}$ is discrete\footnote{The spectrum of $H_{\varphi{\sf E}}$ is discrete and coincides with ${\sf E}'$ 
for $\lambda=\infty$. However, for finite $\lambda$, the spectrum of $H_{\varphi{\sf E}}$ includes both its discrete spectrum part ${\sf E}'$ 
and the point of essential spectrum $\lambda$.} and it coincides with ${\sf E}'$.
\end{proof}

\begin{rem}
The statement in Theorem \ref{kkk31} can be readily extended to cover the scenario of an increasing sequence $\ldots{\leq}E_j{\leq}{E_{j+1}}\leq\ldots$
 provided that eigenvalues have finite multiplicities.
 \end{rem}

Theorem \ref{kkk31} establishes a sufficient condition for the existence of the discrete spectrum of $H_{\varphi{\sf E}}$ 
without the need for explicit construction. The Min-Max principle \cite[p. 265]{Schmudgen} can be utilized to compute the eigenvalues. 
Additionally, an alternative method is presented below, which is specifically tailored to the properties of operators $H_{\varphi{\sf E}}$.

\begin{prop}\label{K10}
Let $H_{{\varphi}\sf E}$ be a Hamiltonian generated by the pair $(\F_\varphi, {\sf E})$  and let ${\mathcal R}(\theta_\varphi)$ be the image set of the analysis operator $\theta_\varphi$ associated with $\F_\varphi$ (see \eqref{K31}). Then $\mu\in\sigma_p(H_{{\varphi}\sf E})$ if and only if 
 there exists a sequence $\{c_j\}\in\mathcal{R}(\theta_\varphi)$  such that  
\begin{equation}\label{goa1}
 \{(E_j-\mu)c_j\}\in\ell_2(\mathbb{J})\ominus{\mathcal R}(\theta_\varphi)=\mathcal{R}(\theta_\psi).
\end{equation}   
The corresponding eigenvector of  $H_{{\varphi}\sf E}$ coincides with $f=\sum_{j\in\mathbb{J}}c_j \varphi_j$.  
\end{prop}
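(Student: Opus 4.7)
The plan is to translate the eigenvalue problem $H_{\varphi{\sf E}}f=\mu f$ into a statement purely on the coefficient side via the analysis operator $\theta_\varphi$, using the two structural identities for Parseval frames: $\theta_\varphi^*\theta_\varphi=I_\K$ (equivalently, $f=\sum_j\langle\varphi_j,f\rangle\varphi_j$ from \eqref{uman1}), and the characterisation of $\ker\theta_\varphi^*$ given by \eqref{K13}. Setting $c_j:=\langle\varphi_j,f\rangle$ so that $\{c_j\}=\theta_\varphi f\in\mathcal{R}(\theta_\varphi)$, the reconstruction formula gives $f=\theta_\varphi^*\{c_j\}=\sum_j c_j\varphi_j$, and the definition \eqref{K5} of $H_{\varphi{\sf E}}$ lets us rewrite $H_{\varphi{\sf E}}f=\theta_\varphi^*\{E_jc_j\}$.

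For the forward direction, I would assume $\mu\in\sigma_p(H_{\varphi{\sf E}})$ with a nontrivial eigenvector $f$, and define $\{c_j\}$ as above. The identity $H_{\varphi{\sf E}}f-\mu f=\theta_\varphi^*\{(E_j-\mu)c_j\}=0$ combined with \eqref{K13} immediately yields $\{(E_j-\mu)c_j\}\in\ker\theta_\varphi^*=\mathcal{R}(\theta_\psi)$. A small preliminary observation is that $f\in\mathcal{D}(H_{\varphi{\sf E}})=\mathcal{D}_{min}$ guarantees $\{E_jc_j\}\in\ell_2(\mathbb{J})$, hence $\{(E_j-\mu)c_j\}\in\ell_2(\mathbb{J})$, so the orthogonality statement makes sense.

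For the converse, given $\{c_j\}\in\mathcal{R}(\theta_\varphi)$ (nonzero, which is implicit) with $\{(E_j-\mu)c_j\}\in\mathcal{R}(\theta_\psi)$, I would first verify that $f:=\theta_\varphi^*\{c_j\}=\sum_j c_j\varphi_j$ lies in $\mathcal{D}_{min}$: since both $\{c_j\}$ and $\{(E_j-\mu)c_j\}$ are in $\ell_2(\mathbb{J})$, the trivial bound $E_j^2|c_j|^2\le 2(E_j-\mu)^2|c_j|^2+2\mu^2|c_j|^2$ gives summability. Next, $f\ne 0$ follows because $\{c_j\}\in\mathcal{R}(\theta_\varphi)$ and $f=0$ would force $\{c_j\}\in\ker\theta_\varphi^*=\mathcal{R}(\theta_\psi)$, contradicting the direct sum decomposition \eqref{abba1}. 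Finally, splitting $E_j c_j=\mu c_j+(E_j-\mu)c_j$ and applying $\theta_\varphi^*$ yields $H_{\varphi{\sf E}}f=\mu f+\theta_\varphi^*\{(E_j-\mu)c_j\}=\mu f$, the second term vanishing by \eqref{K13}.

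The only delicate point I anticipate is the domain verification in the converse direction: one has to be slightly careful to deduce $\{E_jc_j\}\in\ell_2(\mathbb{J})$ before writing $H_{\varphi{\sf E}}f=\theta_\varphi^*\{E_jc_j\}$, since otherwise the manipulation is formal. Everything else is a direct rewriting via $\theta_\varphi$ and $\theta_\varphi^*$ combined with the kernel description \eqref{K13}; the non-triviality of $f$ relies crucially on \eqref{abba1}, which is the structural consequence of the Naimark extension.
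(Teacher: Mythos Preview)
Your proof is correct and follows essentially the same route as the paper's: both directions rewrite the eigenvalue equation as $\sum_{j}(E_j-\mu)c_j\varphi_j=0$ with $c_j=\langle\varphi_j,f\rangle$ and then invoke \eqref{K13} to pass between this vanishing and the orthogonality condition \eqref{goa1}. You are slightly more explicit than the paper in checking $\{E_jc_j\}\in\ell_2(\mathbb{J})$ via the elementary bound and in verifying $f\ne0$ from the decomposition \eqref{abba1}, points the paper leaves implicit.
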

\begin{proof} 
Assume that    $H_{{\varphi}\sf E}f=\mu{f}$ for some $f\in\mathcal{D}(H_{{\varphi}\sf E})$. Since $\F_\varphi$ is a PF, $f=\sum_{j\in\mathbb{J}}\langle \varphi_j, f \rangle\varphi_j$ and the relation $H_{{\varphi}\sf E}f=\mu{f}$ takes the form
\begin{equation}\label{rrr1}
\sum_{j\in\mathbb{J}}(E_j-\mu)\langle \varphi_j, f \rangle\varphi_j=\sum_{j\in\mathbb{J}}(E_j-\mu)c_j\varphi_j=0, \qquad c_j=\langle \varphi_j, f \rangle.
\end{equation}

In view of \eqref{K31}, $\theta_\varphi{f}=\{c_j\}$, i.e., the sequence $\{c_j\}$
belongs to $\mathcal{R}(\theta_\varphi)$. Moreover, 
$\{E_jc_j\}\in\ell_2(\mathbb{J})$
since $f\in\mathcal{D}(H_{{\varphi}\sf E})$. Combining  \eqref{K13} with \eqref{rrr1} we
obtain \eqref{goa1}.

Conversely, if \eqref{goa1} holds, then $\{(E_j-\mu)c_j\}\in\ell_2(\mathbb{J})$ and
$\sum_{j\in\mathbb{J}}(E_j-\mu)c_j\varphi_j=0$, by virtue of \eqref{K13}. 
The sequence $\{c_j\}\in\mathcal{R}(\theta_\varphi)$ determines a vector 
$f=\sum_{j\in\mathbb{J}}c_j \varphi_j\in\K$, where $c_j=\langle \varphi_j, f \rangle$.
It follows from \eqref{goa1} that $\{E_j\langle \varphi_j, f\rangle\}\in\ell_2(\mathbb{J})$. This means that  $f\in\mathcal{D}(H_{{\varphi}\sf E})$ and the relation 
$\sum_{j\in\mathbb{J}}(E_j-\mu)c_j\varphi_j=\sum_{j\in\mathbb{J}}(E_j-\mu)\langle \varphi_j, f\rangle \varphi_j=0$
is equivalent to $H_{{\varphi}\sf E}f-\mu{f}=0$.
\end{proof}

 Typically, a PF $\F_\varphi=\{\varphi_j, j\in\mathbb{J}\}$ comprises a Riesz basis component $\F_\varphi^0=\{\varphi_j, j\in\mathbb{J}_0\}$ ($\mathbb{J}_0\subset\mathbb{J}$). 
 This is especially true when $\F_\varphi$ has finite excess.
In this case, the Hamiltonian $H_{{\varphi}\sf E}$ generated by the pair $(\F_\varphi, {\sf E})$ can be decomposed:
\begin{equation}\label{bbb22}
H_{{\varphi}\sf E}=A_0+A_1,  \qquad A_i=\sum_{j\in\mathbb{J}_i}E_j\langle \varphi_j, \cdot \rangle\varphi_j.
\end{equation} 
An additional analysis leads to the following: 
\begin{prop}\label{kkk21}
Assume that a PF $\F_\varphi=\{\varphi_j, j\in\mathbb{J}\}$
can be decomposed $\F_\varphi=\F_\varphi^0\cup\F_\varphi^1$
in such a way that $\F_\varphi^0=\{\varphi_j , j\in\mathbb{J}_0\}$ and $\F_\varphi^1=\{\varphi_j , j\in\mathbb{J}_1\}$ are, respectively, a Riesz basis and a frame sequence in $\K$. Denote by $S_0$ the frame operator for $\F_\varphi^0$ and suppose that $\sup_{j\in\mathbb{J}_1}\{|E_j|\}<\infty$. Then, for 
all $f\in\mathcal{D}(H_{{\varphi}\sf E})$,
\begin{equation}\label{jjj1}
H_{{\varphi}\sf E}f=S_0^{1/2}H_{e{\sf E}_0}S_0^{1/2}f+(I-S_0)^{1/2}H_{e{\sf E}_1}(I-S_0)^{1/2}f 
\end{equation}
where the self-adjoint operator  $H_{e{\sf E}_0}$ is defined by \eqref{bbb1}, the set
 $\{e_j,  j\in\mathbb{J}_1\}$ is  a PF of 
the subspace $\K_1=\overline{{\sf span}\ \F_\varphi^1}=\mathcal{R}(I-S_0)$, and 
 $H_{e{\sf E}_1}=\sum_{j\in\mathbb{J}_1}E_j \langle e_j, \cdot\rangle\,e_j$ is a bounded
 self-adjoint operator in $\K_1$.
\end{prop}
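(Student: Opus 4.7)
The plan is to split $H_{\varphi\sf E}$ as $A_0+A_1$ following \eqref{bbb22} and then recognize each summand as a conjugation by a square root of a frame operator.

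First I would exploit the Parseval identity: since $\F_\varphi$ is a PF,
\begin{equation*}
I=\sum_{j\in\mathbb{J}}\langle\varphi_j,\cdot\rangle\varphi_j=S_0+S_1,\qquad S_1:=\sum_{j\in\mathbb{J}_1}\langle\varphi_j,\cdot\rangle\varphi_j,
\end{equation*}
so $S_1=I-S_0$, which is bounded and self-adjoint on $\K$. Because each $\varphi_j$ with $j\in\mathbb{J}_1$ sits in $\K_1$, the range of $S_1$ is contained in $\K_1$; conversely $S_1$ vanishes on $\K_1^\perp$ (all Fourier coefficients are zero), and on $\K_1$ it is the frame operator of the frame sequence $\F_\varphi^1$, hence a bijective positive operator on $\K_1$. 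This identifies $\K_1=\mathcal{R}(I-S_0)$ as claimed and, since $\ker(I-S_0)^{1/2}=\ker(I-S_0)$, also gives $\mathcal{R}((I-S_0)^{1/2})=\K_1$ with $(I-S_0)^{1/2}$ a bijection of $\K_1$ onto itself.

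Next I would produce the PF $\{e_j\}_{j\in\mathbb{J}_1}$ by the canonical tight-frame construction: set $e_j=\bigl((I-S_0)^{1/2}|_{\K_1}\bigr)^{-1}\varphi_j$, so that $\varphi_j=(I-S_0)^{1/2}e_j$ and $\{e_j\}$ is a PF of $\K_1$ (standard result, e.g.\ from \cite{chri}). Since $\sup_{j\in\mathbb{J}_1}|E_j|<\infty$, Corollary \ref{uman7}(i), applied with $\K_1$ in the role of the ambient Hilbert space, ensures that $H_{e{\sf E}_1}$ is a bounded self-adjoint operator on $\K_1$.

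With this in hand I would turn to \eqref{bbb22}. For the first summand the work has already been done inside the proof of Corollary \ref{uman7}(ii): it was shown there that $\mathcal{D}(A_0)=\mathcal{D}(H_{\varphi\sf E})$ and $A_0f=S_0^{1/2}H_{e{\sf E}_0}S_0^{1/2}f$ on this domain, where $H_{e{\sf E}_0}$ is the self-adjoint operator from \eqref{bbb1}. For the second summand, using $\varphi_j=(I-S_0)^{1/2}e_j$ and the boundedness of both $(I-S_0)^{1/2}$ and $H_{e{\sf E}_1}$ to interchange the sum with the outer $(I-S_0)^{1/2}$, I compute, for every $f\in\K$,
\begin{equation*}
A_1f=\sum_{j\in\mathbb{J}_1}E_j\langle(I-S_0)^{1/2}e_j,f\rangle(I-S_0)^{1/2}e_j=(I-S_0)^{1/2}H_{e{\sf E}_1}(I-S_0)^{1/2}f.
\end{equation*}
Adding the two expressions yields \eqref{jjj1}.

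The main obstacle is bookkeeping rather than analysis: one must carefully justify that $(I-S_0)^{1/2}$ is invertible precisely on $\K_1$ (so that the definition of $e_j$ and the identity $\varphi_j=(I-S_0)^{1/2}e_j$ are unambiguous) and that the domain $\mathcal{D}(H_{\varphi\sf E})$ coincides with $\mathcal{D}(A_0)$, the latter being inherited directly from Corollary \ref{uman7}(ii). Once these points are in place, the factorization of $A_1$ is a short bounded-operator manipulation, and the factorization of $A_0$ is exactly what was already proved.
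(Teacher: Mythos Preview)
Your proposal is correct and follows essentially the same route as the paper: split $H_{\varphi\sf E}=A_0+A_1$ via \eqref{bbb22}, invoke the proof of Corollary~\ref{uman7} to factor $A_0=S_0^{1/2}H_{e{\sf E}_0}S_0^{1/2}$ on $\mathcal{D}(H_{\varphi\sf E})$, use the Parseval identity to recognize $I-S_0$ as the frame operator of $\F_\varphi^1$ on $\K_1$ (hence boundedly invertible there), define $e_j=((I-S_0)^{1/2}|_{\K_1})^{-1}\varphi_j$ to obtain a PF of $\K_1$, and factor the bounded operator $A_1$ accordingly. The paper's proof is organized in exactly this way, with the same appeals to \cite[Theorem~6.1.1]{chri} and Corollary~\ref{uman7}(i).
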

 \begin{proof} It follows from the proof of Corollary \ref{uman7} that the operator $A_0$ in \eqref{bbb22} coincides with
 $S_0^{1/2}H_{e{\sf E}_0}S_0^{1/2}$, where   $H_{e{\sf E}0}$ is defined by \eqref{bbb1}. 

 The operator $I-S_0$ is nonnegative in $\K$ since 
\begin{equation}\label{krakow1}
((I-S_0)f, f)=\|f\|^2-\sum_{j\in\mathbb{J}_0}|\langle \varphi_j, f \rangle)|^2=\sum_{j\in\mathbb{J}_1}|\langle \varphi_j, f \rangle|^2\geq{0}, \quad f\in\K.
\end{equation}
This means that the square root  $(I-S_0)^{1/2}$ exists. Further, the relation \eqref{krakow1}  implies that 
$$
\ker(I-S_0)=\K\ominus\K_1, \qquad \K_1=\overline{{\sf span}\ \F_\varphi^1}.
$$
Hence, $\K_1$ coincides with $\overline{\mathcal{R}(I-S_0)}$ and it is a reducing subspace for $I-S_0$. 
Denote by $(I-S_0)|_{\K_1}$ the restriction of $I-S_0$ onto $\K_1$. The relation
$$
(I-S_0)f=\sum_{j\in\mathbb{J}}\langle \varphi_j, f \rangle\varphi_j -\sum_{j\in\mathbb{J}_0}\langle \varphi_j, f\rangle\varphi_j=\sum_{j\in\mathbb{J}_1}\langle \varphi_j, f\rangle\varphi_j, \quad f\in\K_1
$$
implies that $(I-S_0)|_{\K_1}$
is  a frame operator of the frame $\F_\varphi^1$  in the Hilbert space $\K_1$. Hence, the inverse operator 
$((I-S_0)_{\K_1})^{-1} : \K_1 \to \K_1$ is bounded and $\overline{\mathcal{R}(I-S_0)}=
{\mathcal{R}(I-S_0)}$.

It follows from  \cite[Theorem 6.1.1]{chri} that the elements $\varphi_j\in\F_\varphi^1$ have the form
$$
  \varphi_j=(I-S_0)^{1/2}e_j, \qquad j\in\mathbb{J}_1,
  $$
  where $\{e_j, j\in\mathbb{J}_1\}$ is a PF  of $\K_1$. Moreover, repeating the proof of the relation \eqref{bbb1} for 
  the case of operator $A_1$ acting in $\K_1$ we get $A_1=(I-S_0)^{1/2}H_{e{\sf E}_1}(I-S_0)^{1/2}$.
Here, the operator $H_{e{\sf E}_1}$ is bounded self-adjoint in $\K_1$ due to the part (i) of Corollary \ref{uman7}. 
\end{proof}

Considering $H_{{\varphi}\sf E}$ as a perturbation of $A_0$ by  $A_1$ and supposing that $A_1$ is sufficiently small with respect to $A_0$
one can expect the coincidence of essential spectra of  $H_{{\varphi}\sf E}$ and $A_0$. 
For example, If $\F_\varphi$ has a finite excess, then $I-S_0$ is a compact operator, and therefore the second operator in \eqref{jjj1} is also compact.
In such a case, the classical Weyl theorem \cite[p. 182]{Schmudgen} implies that $\sigma_{ess}(H_{{\varphi}\sf E})=\sigma_{ess}(A_0+A_1)=\sigma_{ess}(A_0)$.

\section{Examples}\label{sect3}

This section is devoted to a detailed analysis of two examples of our previous results, with some preliminary applications to quantum mechanics. 

\subsection{Hamiltonians generated by the Casazza-Christensen frame}\label{s3.1}
In general, a PF with infinite excess may not contain a Riesz basis as a subset. If a subsequence of the frame elements is allowed to converge to 0 in norm, then it is easy to construct a frame that does not contain a Riesz basis. However, answering a similar question for frames that are norm-bounded below is much more complicated. An example of such a PF that is norm-bounded below, but does not contain a Schauder basis, was first constructed by Casazza and Christensen \cite{CC98}.
 Our aim now is to investigate Hamiltonians generated by that PF and discuss their possible physical applications.

Let  $\K_n$ ($n\in\mathbb{N}$) be a $n$-dimensional Hilbert space with ONB $\{{e}_1^{(n)}, {e}_2^{(n)}, \ldots {e}_n^{(n)}\}$. 
Then the   set 
$\F_\varphi^{(n)}=\{\varphi_j^{(n)}, j\in\mathbb{J}_n\}$ where $\mathbb{J}_n=\{1, 2, \ldots, n, n+1\}$ and
\begin{equation}\label{K61}
\varphi_j^{(n)}={e}_j^{(n)}-\frac{1}{n}\sum_{i=1}^{n}{e}_i^{(n)},  \quad 1\leq{j}\leq{n}, \qquad \varphi_{n+1}^{(n)}=\frac{1}{\sqrt{n}}\sum_{i=1}^{n}{e}_i^{(n)},
\end{equation}
is a PF for $\K_n$ \cite[Lemma 7.5.1]{chri}.  The operator 
\begin{equation}\label{palermo40}
{H_{{\varphi_n}{\sf E}_n}}=\sum_{j\in\mathbb{J}_n}E_j^{(n)} \langle \varphi_j^{(n)}, \cdot\rangle\,\varphi_j^{(n)}=\sum_{j=1}^{n+1}E_j^{(n)} \langle \varphi_j^{(n)}, \cdot\rangle\,\varphi_j^{(n)}, \qquad {\sf E}_n=\{E_j^{(n)}, j\in\mathbb{J}_n\}.
\end{equation}
is a bounded self-adjoint operator in $\K_n.$

 For the PF $\F_\varphi^{(n)}$, the complementary Hilbert space $\M_n$ can be chosen as $\mathbb{C}$ 
and the complementary PF $\F_\psi^{(n)}=\{\psi_j^{(n)}, j\in\mathbb{J}_n\}$ in \eqref{K3b} has the form \cite{onDF}:
\begin{equation}\label{palermo25}
\psi_j^{(n)}=\frac{1}{\sqrt{n}}, \quad j=1,\ldots{n}, \qquad \psi_{n+1}^{(n)}=0.
\end{equation}

Consider the direct sum 
$\K=\left(\sum_{n=1}^\infty\oplus\K_n\right)_{\ell_2}.$ 
The Hilbert space $\K$ consists of sequences $f=(f_1, f_2, \ldots)$ for which\footnote{we identify elements of $\K_n$ with their counterpart in  $\K$, i.e., 
we do not distinguish between $f\in\K_n$ and the sequence in $\K$ having $f$ in the $n$-th entry and otherwise zero.} $f_n\in\K_n$ and 
$\sum_{n=1}^\infty\|f_n\|_{\K_n}^2<\infty$.
The scalar product in $\K$ is defined as follows 
$$ 
\langle f, g\rangle=\sum_{n=1}^\infty\langle f_n, g_n \rangle_{\K_n}, \qquad f, g \in \K. 
$$
Since $\K=\left(\sum_{n=1}^\infty\oplus\K_n\right)_{\ell_2}$, the union of PF's 
$\F_{\varphi}=\bigcup_{n=1}^\infty\F_{\varphi}^{(n)}$ is a PF for $\K$ \cite[Theorem 7.5.2]{chri}.    
The complementary Hilbert space $\M$ in Theorem \ref{Naimark} for the PF $\F_\varphi$ can be chosen as the direct sum
$$
\M=\left(\sum_{n=1}^\infty\oplus\M_n\right)_{\ell_2}=\left(\sum_{n=1}^\infty\oplus\mathbb{C}\right)_{\ell_2}=\ell_2(\mathbb{N}).
$$ 
The  PF $\F_\psi$ in $\M$ coincides with  the union of  PF's: $\F_{\psi}=\bigcup_{n=1}^\infty\F_{\psi}^{(n)}$.

Denote 
$$
{\sf E}=\bigcup_{n=1}^\infty{\sf E}_n=\{E_j^{(n)}, \ n\in\mathbb{N}, \ 1\leq{j}\leq{n+1}\}.
$$

For the PF $\F_\varphi$ and ${\sf E}$ we consider the operator $H_{{\varphi}\sf E}^{min} : \K \to \K$  defined by  \eqref{K22}.
$$
H_{{\varphi}\sf E}^{min}=\sum_{n=1}^\infty\sum_{j=1}^{n+1}E_j^{(n)}\langle \varphi_j^{(n)}, \cdot\rangle\,\varphi_j^{(n)}, \quad \mathcal{D}(H_{{\varphi}\sf E}^{min})=\{f\in\K \ : \ \sum_{n=1}^\infty\sum_{j=1}^{n+1}(E_j^{(n)})^2 |\langle \varphi_j^{(n)},  f\rangle|^2<\infty\}.
$$

Taking into account the decomposition $\K=\left(\sum_{n=1}^\infty\oplus\K_n\right)_{\ell_2}$  and \eqref{palermo40} one can rewrite 
the last formulas as follows
$$
H_{{\varphi}\sf E}^{min}=\sum_{n=1}^\infty\oplus{H_{{\varphi_n}{\sf E}_n}}, \qquad \mathcal{D}({H_{{\varphi_n}{\sf E}_n}^{min}})=\{f\in\K \ : \ \sum_{n=1}^\infty\|H_{{\varphi_n}{\sf E}_n}\|_{\K_n}^2<\infty\}.
$$
The obtained formula means that $H_{{\varphi}\sf E}^{min}$ is self-adjoint in $\K$ 
(since $H_{{\varphi_n}{\sf E}_n}$  are bounded self-adjoint operators in $\K_n$ for all $n\in\mathbb{N}$).
 By Corollary \ref{dd1},  $H_{{\varphi}\sf E}:=H_{{\varphi}\sf E}^{min}=H_{{\varphi}\sf E}^{max}$ 
 is a Hamiltonian generated by the pair $(\F_\varphi, {\sf E})$. 

As the PF $\F_\varphi$ does not contain a Riesz basis \cite{CC98}, Proposition \ref{kkk21} cannot be applied for the investigation of 
$H_{{\varphi}\sf E}$. Nonetheless, one can attempt to use Theorem  \ref{kkk31} and Proposition \ref{K10}.

Assume that $\{E_j^{(n)}\}$ is a strictly increasing sequence, i.e. 
$$
E_1^{(1)}<E_2^{(1)}<E_1^{(2)}<E_{2}^{(2)}<E_3^{(2)}\ldots<E_1^{(n)}<E_2^{(n)}\ldots<E_n^{(n)}<E_{n+1}^{(n)}<\ldots,
$$
where $E_j^{(n)} \to \infty$.  Taking \eqref{palermo25} into account one gets that the operator $B : \K \to \ell_2(\mathbb{N})$ defined by  \eqref{palermo33}
has the form
\begin{equation}\label{bbb46} 
Bf=\left\{\frac{1}{\sqrt{n}}\sum_{j=1}^{n}{E_j^{(n)}} \langle \varphi_j^{(n)}, f\rangle\right\}_{n=1}^\infty, \qquad f\in\mathcal{D}(B)\subseteq\K.
\end{equation}
 If 
$$ 
\sup_{n\in\mathbb{N}, \ 1\leq{j}\leq{n}}\frac{1}{\sqrt{n}}E_j^{(n)}<\infty,
$$
then the right-hand side of \eqref{bbb46} belongs to $\ell_2(\mathbb{N})$ for every $f\in\K$ and the operator $B$ is bounded.
In this case, applying Theorem \ref{kkk31} we arrive at the conclusion that $H_{{\varphi}\sf E}$ has a discrete spectrum.
Explicit calculation of the discrete spectrum can be carried out using Proposition \ref{K10}.  
Building on the argumentation presented in \cite{BK}, we conclude that the discrete spectrum of $H_{\varphi{\sf E}}$
consists of the original quantities\footnote{this fact immediately follows from Corollary \ref{palermo1} since $\psi_{n+1}^{(n)}=0$, see \eqref{palermo25}} $\{E_{n+1}^{(n)}\}_{n=1}^\infty$  
and the solutions $\mu_1, \ldots \mu_{n-1}$ of the equations 
$$
\frac{1}{E_1^{(n)}-\mu}+\frac{1}{E_2^{(n)}-\mu}+\ldots+\frac{1}{E_{n}^{(n)}-\mu}=0, \qquad n\geq{2}.
$$
The corresponding eigenfunctions are 
$$
\varphi_{n+1}^{(n)}=\frac{1}{n}\sum_{i=1}^{n}{e_i^{(n)}} \quad  (\mbox{for the eigenvalues} \quad E_{n+1}^{(n)})  \quad \mbox{and} \quad
f_j=\sum_{i=1}^{n}\frac{1}{E_i^{(n)}-\mu_j}{e}_i^{(n)},  
$$ 
for the eigenvalues $\mu_j$, $1\leq{j}\leq{n-1}$ ($n\geq{2}$).

This example can be used to produce a natural settings in the realm of signal analysis. For that, we need first to introduce some ladder operators. 
In particular, we will now introduce the {\em horizontal} ladder operators $a_n$ and $a_n^*$, acting on $\K_n$, and the {\em vertical} ladder operators $V_n$ and $V_n^*$, mapping $\K_{n+1}$ into $\K_{n}$ and vice-versa. 

First of all we define
$$
a_ne_j^{(n)}=\left\{ \begin{array}{c}
	0, \qquad \,\,\, \mbox{ if } j=1 \\
	\sqrt{j-1}\,e^{(n)}_{j-1}, \qquad  \mbox{ if } j=2,3,\ldots,n,
\end{array}\right.
$$
whose adjoint is 
$$
a_n^* e_j^{(n)}=\left\{ \begin{array}{c}
	\sqrt{j}\,e^{(n)}_{j+1}, \qquad\qquad  \mbox{ if } j=1,2,\ldots,n-1,\\
	0, \qquad  \mbox{ if } j=n. 
\end{array}\right.
$$
These operators, already introduced in \cite{baggarg} in connection with a biological system, satisfy the following commutation rule:
$$
[a_n,a_n^*]=I_{\K_n}-nP_n^{(n)},
$$
where $P_n^{(n)}f=\langle e_n^{(n)},f\rangle_{\K_n}e_n^{(n)}$, for all $f\in\K_n$,  and where $I_{\K_n}$ is the identity operator on $\K_n$.

The action of $a_n$ on the vectors $\varphi_j^{(n)}$ defined by \eqref{K61} is as follows:
$$
a_n\varphi_j^{(n)}=\left\{ \begin{array}{c}
	-\tilde e^{(n)}, \qquad\qquad\qquad\qquad \,\,\, \mbox{ if } j=1 \\
	\sqrt{j-1}\,\varphi^{(n)}_{j-1}+ \frac{\sqrt{j-1}}{\sqrt{n}}\,\varphi_{n+1}^{(n)}	
	-\tilde e^{(n)}, \qquad  \mbox{ if } j=2,3,\ldots,n,\\
	\sqrt{n}\, \tilde e^{(n)}, \qquad\qquad\qquad\qquad \,\,\, \mbox{ if } j=n+1,
\end{array}\right.
$$
 where  
$$
\tilde e^{(n)}=\frac{1}{n}\sum_{i=1}^{n}\sqrt{i}\,{e}_i^{(n)}.
$$
It is easy (but not so relevant) to deduce also the action of $a_n^*$ on $\varphi_j^{(n)}$. 

We can further introduce the operator $V_{n+1}:\K_{n+1}\rightarrow\K_n$ as follows:  
$$
V_{n+1}f=\sum_{j=1}^{n+1}\langle e_j^{(n+1)},f\rangle_{\K_{n+1}}\,\varphi_j^{(n)}, \qquad f\in\K_{n+1}.
$$
In particular, we see that $V_{n+1}e_j^{(n+1)}=\varphi_j^{(n)}$. The adjoint of $V_{n+1}$ can be easily deduced. It is an operator mapping $\K_n$ into $\K_{n+1}$ as follows:
$$
V_{n+1}^*g=\sum_{j=1}^{n+1}\langle \varphi_j^{(n)},g\rangle_{\K_{n}}\,e_j^{(n+1)}, \qquad g\in\K_n.
$$ 
In this case, it is clear that $V_{n+1}^*\varphi_j^{(n)}\neq e_j^{(n+1)}$, in general. However, due to the fact that $\{\varphi_j^{(n)}\}_{j=1}^{n+1}$ is a PF in $\K_n$, 
it is possible to check that $V_{n+1}V_{n+1}^*=I_{\K_n}$, while $V_{n+1}^*V_{n+1}\neq{I}_{\K_{n+1}}$.

As for the possible interpretation of this example, and the ladder-like operators $a_n$, $V_{n+1}$ and their adjoints, 
a natural look at this framework is in terms of signal analysis: $\K_n$ is the set of signals with $n$ bits. If a signal $f$ is $f=e_1^{(n)}$, 
this means that only the first bit is "on", in a signal of $n$ bits. Analogously, if $f=\alpha_1e_1^{(n)}+\alpha_ne_n^{(n)}$, then the first and the last bits are "on", 
with two weights $\alpha_1$ and $\alpha_n$, with $|\alpha_1|^2+|\alpha_n|^2=1$. 
The operators $a_n$ and $a_n^*$ switch on and off the various bits of a signal with $n$ bits, while $V_{n+1}$ and $V_{n+1}^*$ add or remove bits from the signal. 
Also, going from $\{e_j^{(e)}\}$ to $\{\varphi_j^{(e)}\}$ in this context is rather natural: a frame is what is often used in signal analysis to take into account possible loss 
of information during the transmission of the signal. The Hamiltonian can be seen as the energy of the signal, with various contributions arising from different possible
 lengths of the signals.
 
\subsection{Relations with regular pseudo-bosons}\label{s3.2}

This section is focused on a class of examples of PFs connected to the so-called {\em regular pseudo-bosons} \cite{baginbagbook,bagspringer}, which are suitable deformations of the bosonic ladder operators $c=\frac{1}{\sqrt{2}}\left(x+\frac{d}{dx}\right)$ and $c^*=\frac{1}{\sqrt{2}}\left(x-\frac{d}{dx}\right)$.

The starting point here is a bounded operator $X$ with bounded inverse $X^{-1}$  and an ONB $\F_e=\{e_n\in\K, \, n\in \mathbb{J}=\mathbb{N}\cup\{0\}\}$ of a Hilbert space $\K$. 
Then we have 
\begin{prop}\label{prop15}
	If $\|X^*X\|<1$, then the sets 
	$$\F_{\varphi}=\left\{\varphi_n=X^*e_n, \,n\in \mathbb{J}\right\}, \qquad  \F_{\tilde\varphi}=\left\{\tilde\varphi_n=(I-X^*X)^{1/2}e_n, \,n\in \mathbb{J}\right\}
	$$ are Riesz bases of $\K$, 
	with dual bases 
	$$\F_{\psi}=\left\{\psi_n=X^{-1}e_n, \,n\in \mathbb{J}\right\}, \qquad \F_{\tilde\psi}=\left\{\tilde\psi_n=(I-X^*X)^{-\,1/2}e_n, \,n\in \mathbb{J}\right\}.$$ 
	Moreover, the set $\F_{ex}^\varphi=\F_{\varphi}\cup\F_{\tilde\varphi}$  is a  PF of $\K$.
	
	If $\|X^*X\|>1$ then the set  $\F_{\widetilde{\widetilde\psi}}=\left\{\widetilde{\widetilde\psi_n}=(I-(X^*X)^{-1})^{1/2}e_n, \,n\in \mathbb{J}\right\}$
	is a Riesz basis with dual 
	$$
	\F_{\widetilde{\widetilde\varphi}}=\left\{\widetilde{\widetilde\varphi_n}=(I-(X^*X)^{-1})^{-\,1/2}e_n, \,n\in \mathbb{J}\right\}
	$$
	and the set  $\F_{ex}^\psi=\F_{\psi}\cup\F_{\widetilde{\widetilde\psi}}$ is a PF of $\K$. 
	\end{prop}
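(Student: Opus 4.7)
The plan is to verify all the claims by direct computation, leaning on two standard facts: (a) for any bounded bijection $T:\K\to\K$ and any ONB $\{u_n\}$, the set $\{Tu_n\}$ is a Riesz basis of $\K$ with biorthogonal dual $\{(T^*)^{-1}u_n\}$; (b) a family is a Parseval frame exactly when it satisfies the Parseval identity \eqref{uman1}. In both cases the Parseval check reduces, after sliding the adjoint of the relevant operator through the inner product, to applying Parseval's identity for the underlying ONB $\F_e$.

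For the first case $\|X^*X\|<1$, the operator $X^*$ is a bounded bijection since $X^{-1}$ exists and is bounded; by (a), $\F_\varphi=\{X^*e_n\}$ is a Riesz basis with biorthogonal dual $\{((X^*)^*)^{-1}e_n\}=\{X^{-1}e_n\}=\F_\psi$. The hypothesis gives $\langle(I-X^*X)f,f\rangle\geq(1-\|X^*X\|)\|f\|^2$, so $I-X^*X$ is bounded, positive, and boundedly invertible, whence $(I-X^*X)^{\pm1/2}$ are bounded self-adjoint. Applying (a) with $T=(I-X^*X)^{1/2}$ and using self-adjointness, $\F_{\tilde\varphi}$ is a Riesz basis with dual $\F_{\tilde\psi}$. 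For the Parseval identity of $\F_{ex}^\varphi$, any $f\in\K$ satisfies
\begin{align*}
\sum_n|\langle X^*e_n,f\rangle|^2+\sum_n|\langle(I-X^*X)^{1/2}e_n,f\rangle|^2 &= \|Xf\|^2+\|(I-X^*X)^{1/2}f\|^2\\
&=\langle X^*Xf,f\rangle+\langle(I-X^*X)f,f\rangle\\
&=\|f\|^2,
\end{align*}
by Parseval's identity applied twice to $\F_e$.

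The second case runs in parallel, once one reads $\|X^*X\|>1$ together with the implicit positivity $X^*X\geq I$ (equivalently $(X^*X)^{-1}\leq I$) required to make $I-(X^*X)^{-1}\geq 0$ meaningful; then $(I-(X^*X)^{-1})^{\pm1/2}$ are bounded self-adjoint, and (a) gives that $\F_{\widetilde{\widetilde\psi}}$ is a Riesz basis with dual $\F_{\widetilde{\widetilde\varphi}}$. Using $(X^{-1})(X^{-1})^*=(X^*X)^{-1}$, the Parseval identity for $\F_{ex}^\psi$ follows from
\begin{align*}
\sum_n|\langle X^{-1}e_n,f\rangle|^2+\sum_n|\langle(I-(X^*X)^{-1})^{1/2}e_n,f\rangle|^2 &= \|(X^*)^{-1}f\|^2+\langle(I-(X^*X)^{-1})f,f\rangle\\
&=\langle(X^*X)^{-1}f,f\rangle+\|f\|^2-\langle(X^*X)^{-1}f,f\rangle\\
&=\|f\|^2.
\end{align*}

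The main obstacle is not analytic depth but careful bookkeeping: correctly pairing adjoints with inverses when identifying the dual Riesz bases (e.g.\ the identity $(X^{-1})^*X^{-1}=(XX^*)^{-1}$ and its counterpart), and clarifying the positivity hypothesis hidden inside $\|X^*X\|>1$ so that the operator square roots in the statement are legitimately defined. Once these are fixed, everything collapses to a pair of one-line Parseval computations, as shown above.
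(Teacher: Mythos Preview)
Your proof is correct and follows essentially the same route as the paper's: both split $\|f\|^2=\langle X^*Xf,f\rangle+\langle(I-X^*X)f,f\rangle$ and recognise each summand as a Parseval sum for the ONB $\F_e$ after moving the relevant operator across the inner product. Your remark that the bare hypothesis $\|X^*X\|>1$ does not by itself force $I-(X^*X)^{-1}\geq 0$ (one really needs $X^*X\geq I$, equivalently $\|(X^*X)^{-1}\|\leq 1$, and strictly so for the bounded inverse square root) is a genuine imprecision in the statement that the paper's proof glosses over with ``similar''.
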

\begin{proof}
	If $\|X^*X\|<1$, then $I-X^*X$ is a bounded positive operator on $\K$,  and there exist a positive square root 
	$(I-X^*X)^{1/2}$ and its inverse $(I-X^*X)^{-1/2}$ that are bounded operators  on $\K$.  
	Hence the Riesz basis nature of the pairs $\F_{\varphi}$,  $\F_{\psi}$  and  $\F_{\tilde\varphi}$, $\F_{\tilde\psi}$ is clear. 
	To prove that $\F_{ex}^\varphi$ is a PF, let us put $U=X^*X$, which is bounded with bounded inverse and self adjoint, and let us write a generic $f\in\K$ as 
	$f=Uf+(I-U)f$. We have
	$$
	\langle Uf,f\rangle=\langle Xf,Xf\rangle=\sum_{n\in\mathbb{J}}|\langle Xf,e_n\rangle|^2=\sum_{n\in\mathbb{J}}|\langle f,\varphi_n\rangle|^2.
	$$
	Moreover, since $I-U$ is a positive operator, 
	$$
	(I-U)f=(I-U)^{1/2}\left((I-U)^{1/2}f\right)=(I-U)^{1/2}\sum_{n\in\mathbb{J}}\langle e_n,(I-U)^{1/2}f\rangle e_n=\sum_{n\in\mathbb{J}}\langle \tilde\varphi_n,f\rangle \tilde\varphi_n,
	$$
	so that $\langle (I-U)f,f\rangle=\sum_{n\in\mathbb{J}}|\langle \tilde\varphi_n, f\rangle|^2$. Hence
	$$
	\|f\|^2=\langle Uf,f\rangle+\langle (I-U)f,f\rangle=\sum_{n\in\mathbb{J}}|\langle f,\varphi_n\rangle|^2+\sum_{n\in\mathbb{J}}|\langle \tilde\varphi_n,f\rangle|^2,
	$$
	which implies our claim: $\F_{ex}^\varphi$ is a PF.
	
	The proof of the second part of the proposition is similar to this, and will not be repeated.
	
\end{proof}

\vspace{2mm}

{\bf Remarks:--} (1) It is clear that if the sets $\F_{\tilde\varphi}$ and $\F_{\tilde\psi}$ are well defined, then the other sets, 
$\F_{\widetilde{\widetilde\varphi}}$ and $\F_{\widetilde{\widetilde\psi}}$, are not. 
The point is that only one between $I-U$ and $I-U^{-1}$ can be positive. 
This does not exclude the possibility to extend the above construction to consider both these possibilities. But we will not investigate further this point here.

\subsubsection{An explicit construction}

As it is discussed in \cite{baginbagbook,bagspringer}, in particular, pseudo-bosons can be seen as suitable deformations of the ladder operators $c$ and $c^*$ 
we introduced before. It is well known that these operators can be used to diagonalize the Hamiltonian of a quantum harmonic oscillator $H_0=\frac{1}{2}(p^2+x^2)$, 
where $p$ and $x$ are respectively the momentum and the position operators (both self-adjoint). 
In fact, after some algebra we can write $H_0=c^*c+\frac{1}{2}\,I$, and its eigenstates $e_n$ can be constructed by fixing first the vacuum $e_0$, i.e. a vector in $\K$ such that $ce_0=0$, and then acting on it with powers of $c^*$: $e_n=\frac{(c^*)^n}{\sqrt{n!}}\,e_0$. Then, $\F_e=\{e_n, \,n\geq0\}$ is an ONB of $\K$. In particular, in the position representation, where $c$ and $c^*$ are the differential operators already introduced, we have $\K=\ltwo$ and
$$
e_n(x)=\frac{1}{\sqrt{2^nn!\sqrt{\pi}}}H_n(x)\,e^{-\frac{x^2}{2}},
$$
where $H_n(x)$ is the $n$-th Hermite polynomial. 

\vspace{2mm}

{\bf Remark:--} Despite the apparent simplicity of the system (the well known harmonic oscillator), it is maybe useful to stress that this is not necessarily trivial. 
Indeed, we could think of $H_0$ as the single electron Hamiltonian of a two-dimensional gas of electrons in a strong magnetic field, orthogonal to a fixed plane, 
when expressed in suitable variables. This is the physical system which is behind the Landau levels, and the fractional quantum Hall effect, see \cite{chakra,macdonald}
 for instance, and has attracted a lot of interest in the past years, \cite{fqhe4, fqhe1,fqhe2,fqhe3} among the others.  In most of these papers, the possibility of modifying a single wave function in the so-called {\em lowest Landau level} (LLL) is 
used to construct the wave function for the gas of electrons, localized at different lattice sites and minimizing the energy of the gas.

\vspace{2mm}

In what we will do now, we are inspired by the possibility of translating a wave function of the LLL while staying in the same energetic level. However, rather than using only translation operators, we will consider a combination of multiplication and translation operators. More explicitly, let $K$ and $T$ be the operators defined as follows:
$$
Kf(x)=m(x)f(x), \qquad Tf(x)=f(x-\alpha), \qquad  f\in\ltwo.
$$
Here $\alpha>0$, fixed, and $m(x)$ is a complex-valued, smooth\footnote{We could take, for instance, $m(x)\in C^\infty$.} function satisfying $0<m\leq|m(x)|\leq M<1$  in $\mathbb{R}$. These operators are bounded, with bounded inverse. In particular, $T$ is unitary:
$$
K^{-1}f(x)=\frac{1}{m(x)}\,f(x), \qquad K^*f(x)=\overline{m(x)}\,f(x), \qquad T^*f(x)=T^{-1}f(x)=f(x+\alpha).
$$
Now, if we put $X^*=TK$, we get the following results:
$$
X^*f(x)=m(x-\alpha)\,f(x-\alpha), \qquad Xf(x)=\overline{m(x)}f(x+\alpha),
$$
together with
$$
(X^{-1})^*f(x)=\frac{1}{m(x)}\,f(x+\alpha), \qquad  X^{-1}f(x)=\frac{1}{\overline{m(x-\alpha)}}\,f(x-\alpha).
$$
The operators $X^*X$ and $XX^*$ turn out to be both multiplication operators:
$$
X^*Xf(x)=|m(x-\alpha)|^2f(x), \qquad XX^*f(x)=|m(x)|^2f(x).
$$
Because of our assumption on $m$, we have $|m(x-\alpha)|^2\leq{M^2}<1$, and therefore $\|X^*X\|<1$: we are in the first case of Proposition \ref{prop15}, 
so that the sets $\F_{\widetilde{\widetilde\varphi}}$ and $\F_{\widetilde{\widetilde\psi}}$ cannot be defined. Still we find
$$
\varphi_n(x)=m(x-\alpha)e_n(x-\alpha), \qquad \psi_n(x)=\frac{1}{\overline{m(x-\alpha)}}e_n(x-\alpha)
$$
and
$$
\tilde\varphi_n(x)=\sqrt{1-|m(x-\alpha)|^2}\,e_n(x), \qquad \tilde\psi_n(x)=\frac{1}{\sqrt{1-|m(x-\alpha)|^2}}\,e_n(x).
$$
Biorthonormality (in pairs) of these functions is manifest, while the fact that $\F_{ex}^\varphi=\F_{\varphi}\cup\F_{\tilde\varphi}$ is a PF is not as clear, but it is a consequence of Proposition \ref{prop15}.

Following \cite{bagspringer} it is easy to find the ladder operators for $\F_\varphi$ and $\F_{\tilde\varphi}$, and for their dual Riesz bases. We introduce the operators $a_\varphi$, $b_\varphi$, $a_{\tilde\varphi}$ and $b_{\tilde\varphi}$ as follows:
$$
a_\varphi f(x)=X^*c(X^*)^{-1}f(x), \qquad b_\varphi f(x)=X^*c^*(X^*)^{-1}f(x),
$$
and
$$
a_{\tilde\varphi} f(x)=(I-X^*X)^{1/2}c(I-X^*X)^{-1/2}f(x), \qquad b_{\tilde\varphi} f(x)=(I-X^*X)^{1/2}c^*(I-X^*X)^{-1/2}f(x),
$$
$\forall f(x)\in\scr$, the set of the $C^\infty$, fast decreasing, functions (the Schwartz space). Simple computations allow us to deduce the following expressions:
$$
a_\varphi=c-\frac{1}{\sqrt{2}}\left(\alpha+\frac{m'(x-\alpha)}{m(x-\alpha)}\right), \qquad b_\varphi=c^*-\frac{1}{\sqrt{2}}\left(\alpha-\frac{m'(x-\alpha)}{m(x-\alpha)}\right),
$$
while
$$
a_{\tilde\varphi}=c-\frac{1}{\sqrt{2}}\frac{q'(x)}{q(x)}, \qquad b_{\tilde\varphi}=c^*+\frac{1}{\sqrt{2}}\frac{q'(x)}{q(x)},
$$
where $q(x)=\sqrt{1-|m(x-\alpha)|^2}$. Incidentally we observe that we can rewrite $\frac{q'(x)}{q(x)}=\frac{d(\log(q(x)))}{dx}$.

Pseudo-bosonic operators are useful since they act as ladder operators on the families of function deduced before. In particular, we have
$$
a_\varphi\varphi_n(x)=\sqrt{n}\varphi_{n-1}(x), \qquad b_\varphi\varphi_n(x)=\sqrt{n+1}\varphi_{n+1}(x), 
$$
and similarly
$$
a_{\tilde\varphi}\tilde\varphi_n(x)=\sqrt{n}\tilde\varphi_{n-1}(x), \qquad b_{\tilde\varphi}\tilde\varphi_n(x)=\sqrt{n+1}\tilde\varphi_{n+1}(x), 
$$
with the understanding that $\varphi_{-1}(x)=\tilde\varphi_{-1}(x)=0$. As a consequence, the various $\varphi_n(x)$ are eigenstates of $N_\varphi=b_\varphi a_\varphi$, while  each $\tilde\varphi_n(x)$ is an eigenstate of $N_{\tilde\varphi}=b_{\tilde\varphi} a_{\tilde\varphi}$, both with eigenvalue $n$. If we further compute the adjoints of these operators, then we obtain ladder operators 
for the dual families, $\F_\psi$ and $\F_{\tilde\psi}$, \cite{baginbagbook, bagspringer}.

Defining now the vectors $\Phi_n$ as
$$
\Phi_n(x)=\left\{\begin{array}{c} 
	\varphi_n(x), \qquad\,\,\,\,\,\, n=0,1,2,3,\ldots \\
	\tilde{\varphi}_{-n-1}(x), \qquad\,\,\, {n=-1,- 2, - 3,\ldots}
\end{array}\right.
$$
and the set $\F_\Phi=\{\Phi_n(x),\,n\in\mathbb{Z}\}{=\F_\varphi\cup\F_{\tilde{\varphi}}}$, it is easy to check that, taken any $f(x)\in\ltwo$,
$$
\sum_{n\in\mathbb{Z}}\langle\Phi_n,f\rangle\Phi_n(x)=\sum_{n=0}^\infty\langle\varphi_n,f\rangle\varphi_n(x)+\sum_{n=0}^\infty\langle\tilde{\varphi}_n,f\rangle\tilde{\varphi}_n(x)=X^*Xf(x)+(1-X^*X)f(x)=f(x),
$$
as it should {(since $\F_\varphi\cup\F_{\tilde{\varphi}}$ is a PF in $\K=\ltwo$)}. Now, given a set of (real) numbers $E_n$, $n\in\mathbb{Z}$, we can consider an operator
$$
H=\sum_{n\in\mathbb{Z}}E_n\langle\Phi_n,\cdot\rangle\Phi_n(x)=H_1+H_2,
$$
where 
$$
H_1=\sum_{n=0}^\infty E_n\langle\varphi_n,\cdot\rangle\varphi_n(x), \qquad H_2=\sum_{n=0}^\infty E_{-(n+1)}\langle\tilde\varphi_n,\cdot\rangle\tilde\varphi_n(x).
$$
If we now fix $E_n=E_{-(n+1)}=n$, then 
$$
H_1=\sum_{n=0}^\infty n\langle\varphi_n,\cdot\rangle\varphi_n(x)=N_\varphi \sum_{n=0}^\infty \langle\varphi_n,\cdot\rangle\varphi_n(x), 
$$ while $H_2=\sum_{n=0}^\infty n\langle\tilde\varphi_n,\cdot\rangle\tilde\varphi_n(x)=N_{\tilde\varphi} \sum_{n=0}^\infty \langle\tilde\varphi_n,\cdot\rangle\tilde\varphi_n(x)$.

As for a possible interpretation of these two terms, we can go back to our previous remark, and to the explicit expressions of the functions $\varphi_n(x)$ and $\tilde\varphi_n(x)$. In particular, while these latter are proportional (via the weight function $\sqrt{1-|m(x-\alpha)|^2}$) to the $e_n(x)$, the functions $\varphi_n(x)$ are again proportional  (but via the other weight function $m(x-\alpha)$) to the translated version of the $e_n(x)$. Then, while $H_2$ can be seen as the single-electron deformed Hamiltonian for a particle in the lowest Landau level, $H_1$ can be seen as its shifted version (with a different weight function). This can be interesting in connection with the crystals constructed out of the single electron, as in 
\cite{fqhe4, fqhe1, fqhe2, fqhe3}, since it could produce two different lattices, one shifted with respect to the other. In condensed matter,  lattices of this kind are useful, like the so-called reciprocal lattices.

\vspace{2mm}

These examples do not cover all possible applications of the general strategy proposed in this paper. More results, and more applications, are part of our future plans.

\section*{Acknowledgments}

FB was partially supported by the University of Palermo, via the CORI 2017 action, and by the Gruppo Nazionale di Fisica Matematica of Indam.
SK was partially supported by Faculty of Applied Mathematics AGH UST statutory tasks within subsidy of Ministry of Science and Higher Education, and
by the GNFM. SK expresses their gratitude to the University of Palermo for its warm hospitality.

\end{document}